\title{
\resizebox{1.0\linewidth}{!}{Watermarking Makes Language Models Radioactive}
}
\author{
   Tom Sander$^*$ \\
   Meta FAIR \& École polytechnique  \\
  \And
  Pierre Fernandez$^*$\\ 
  Meta FAIR \& Inria Rennes
  \AND
  Alain Durmus \\
  École polytechnique
  \And
  Matthijs Douze \\
  Meta FAIR
  \And
  Teddy Furon \\
  Inria Rennes
}
    \newcommand{\pierre}[1]{}
    \newcommand{\tom}[1]{}
    \newcommand{\todo}[1]{}
    \newcommand{\matthijs}[1]{}
    \newcommand{\teddy}[1]{}
    \newcommand{\alain}[1]{}
    \newcommand{\tom}[1]{{\color{purple} [\textbf{Tom}: #1]}}
    \newcommand{\pierre}[1]{{\color{blue} [\textbf{Pierre}: #1]}}
    \newcommand{\matthijs}[1]{{\color{orange} [\textbf{Matthijs}: #1]}}
    \newcommand{\teddy}[1]{{\color{orange} [\textbf{Teddy}: #1]}}
    \newcommand{\alain}[1]{{\color{orange} [\textbf{AD}: #1]}}
    \newcommand{\todo}[1]{{\color{red} [\textbf{TODO}: #1]}}
\definecolor{metablue}{HTML}{0064E0}
\definecolor{metafg}{HTML}{1C2B33}
\definecolor{metabg}{HTML}{F1F4F7}
\newcommand{\cmark}{\ding{51}\xspace}%
\newcommand{\xmark}{\ding{55}\xspace}%
\newcommand{\xmarkg}{\textcolor{lightgray}{\ding{55}}\xspace}%
\newcommand{\R}{\mathds{R}}
\def\1{\mathbf{1}}
\def\V{\mathcal{V}}
\def\H{\mathcal{H}}
\def\Prob{\mathds{P}}
\newcommand{\eg}{e.g.,\@ }
\newcommand{\ie}{i.e.,\@ }
\newcommand{\aka}{a.k.a.,\@ }
\def\A{\mathcal A}
\def\B{\mathcal B}
\def\D{\mathcal D}
\newcommand{\logit}{\boldsymbol{\ell}}
\newcommand{\logpval}{\log_{10}(p)}
\newcommand{\pval}{p\textrm{-value}}
\newcommand{\supsetting}{supervised}
\newcommand{\Supsetting}{Supervised}
\newcommand{\unsupsetting}{un\supsetting}
\newcommand{\Unsupsetting}{Un\supsetting}
\newcommand{\aux}[1]{{\small{\textcolor{gray}{$\pm$#1}}}}
\definecolor{Gray}{gray}{0.95}
\newcommand{\colorcell}{\cellcolor{Gray}}
\newlength\savewidth\newcommand\shline{\noalign{\global\savewidth\arrayrulewidth
  \global\arrayrulewidth 1pt}\hline\noalign{\global\arrayrulewidth\savewidth}}
\newtheorem{definition}{Definition}
\newtheorem{proposition}{Proposition}
\begin{document}

\def \thefootnote{*} \footnotetext{
Equal Contribution. Correspondence at \{tomsander,pfz\}@meta.com \vspace{-1.5em}
}

\maketitle

\begin{abstract}
We investigate the \emph{radioactivity} of text generated by large language models (LLM), \ie whether it is possible to detect that such synthetic input was used to train a subsequent LLM.
Current methods like membership inference or active IP protection either work only in settings where the suspected text is known or do not provide reliable statistical guarantees.
We discover that, on the contrary, it is possible to reliably determine if a language model was trained on synthetic data if that data is output by a watermarked LLM.
Our new methods, specialized for radioactivity, detects with a provable confidence weak residuals of the watermark signal in the fine-tuned LLM.
We link the radioactivity contamination level to the following properties: the watermark robustness, its proportion in the training set, and the fine-tuning process.
For instance, if the suspect model is open-weight, we demonstrate that training on watermarked instructions can be detected with high confidence ($p$-value $< 10^{-5}$) even when as little as $5\%$ of training text is watermarked.
Radioactivity detection code is available at \url{https://github.com/facebookresearch/radioactive-watermark}
\end{abstract}

\begin{figure*}[h]
    \centering
    \includegraphics[width=1.0\textwidth, clip, trim=0 0.5cm 0 0]{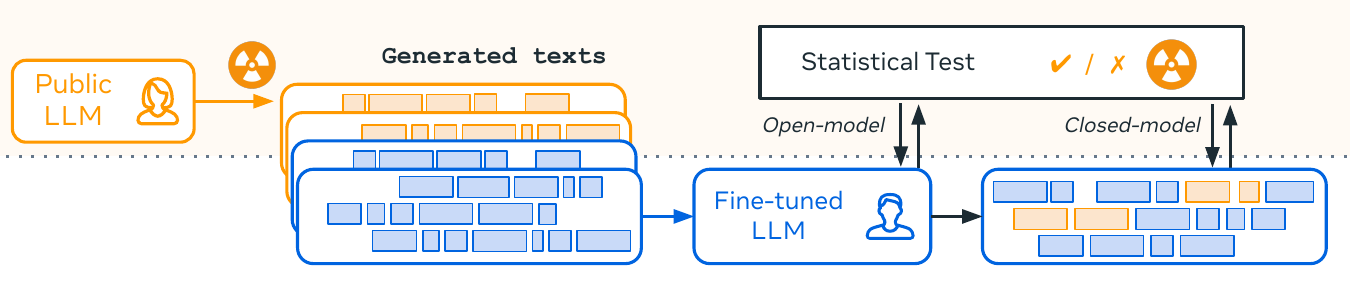}
    \captionsetup{font=small}
    \caption{
    Bob fine-tunes his LLM on data with a fraction coming from Alice's LLM.
    This leaves traces in Bob's model that Alice can detect reliably, provided that her text was watermarked.
    Thus, a side effect of Alice's watermark, intended for machine-generated text detection, is to reveal what data Bob's model was fine-tuned on.
    }
    \vspace{-0.2cm}
    \label{fig:fig1}
\end{figure*}

\section{Introduction}

Large Language Models (LLMs) are often instruction fine-tuned to align them with human prompts and improve their performance and generalization~\citep{ouyang2022training, wei2022finetuned, chung2022scaling}.
Fine-tuning requires expert knowledge to balance diversity and quality in the instruction dataset and a costly collection of manual annotations, especially for alignment~\citep{openai2023gpt, touvron2023llama2, team2023gemini}.
To address the cost and the difficulties of fine-tuning, practitioners often train on synthetic data generated by a model that has already been instructed, such as Bard, ChatGPT, or Claude. 
For example, works by \citet{wang2022self, honovich2022unnatural, peng2023instruction} created instruction data for many of the most recent LLMs~\citep{alpaca, xu2023baize, gunasekar2023textbooks, mukherjee2023orca}. 
This may also be unintentional when, for example, Turkers use ChatGPT to perform their tasks~\citep{veselovsky2023artificial}.
Such imitation raises questions about whether the fine-tuned model is a derivative work of the original model~\citep{wallace2020imitation}. 
In this context, it is crucial to understand how to detect when LLM outputs are used as training data.

Meanwhile, recent AI regulations enforce the transparency of generative models.
This is increasingly important in cases where the generated content may be used for malicious purposes~\citep{weidinger2022taxonomy, crothers2022machine}.
One approach is \emph{watermarking}. 
It embeds a secret trace in the synthetic content that can be detected to identify the generating model.
In the context of LLMs, recent techniques make detection efficient with minimal degradation of the generated text quality by altering the sampling of next tokens~\citep{aaronson2023watermarking,kirchenbauer2023watermark, kirchenbauer2023reliability}.

Based on these two observations, this study addresses the following question:
\begin{center}
\emph{What occurs when watermarked text is employed as fine-tuning data?}
\end{center}
We explore the potential ``radioactivity'' -- a term coined by~\citet{sablayrolles2020radioactive} -- of LLM watermarking,
which refers to the capacity of watermarked training data to contaminate a model.

We examine a model that has been fine-tuned on a corpus that may contain watermarked text (see Fig.~\ref{fig:fig1}).
The baseline method for detecting radioactivity executes the original watermark detection on the outputs generated by this model. 
However, this approach proves ineffective %
because the residual of the watermark is a weak signal hardly detectable in plain output text.
In this work, we are able to demonstrate that LLM watermarking is indeed radioactive thanks to our specific protocol designed for revealing weak contamination traces.
Our contributions include:

\begin{itemize}[leftmargin=*, itemsep=1pt, topsep=1pt]
    \item We design radioactivity detection methods for four scenarios based on model (\textit{open} / \textit{closed}) and training data (\textit{\supsetting} / \textit{\unsupsetting}) access. 
    Notably, our open-model detection (Fig.~\ref{fig:open_model}) improves the performance by orders of magnitudes.
\item 
    We show how to obtain reliable $p$-values for watermark detection when scoring millions of tokens. 
\item 
    We prove that watermarked text is radioactive in a real-world setting where an LLM is fine-tuned on slightly watermarked instruction data. 
    For instance, in the open-model scenario, our tests detect radioactivity with a $\pval$ of $10^{-5}$ when only 5\% of fine-tuning data is watermarked (Fig.~\ref{fig:wm-proportion}).
\end{itemize}

\vspace{-0.1cm}
\section{Background}

\subsection{Related work}\label{subsec:related}

\paragraph{Watermarking for LLMs.}
A recent branch of watermarking methods for decoder-only LLMs modifies either the probability distribution~\citep{kirchenbauer2023watermark} or the sampling method of the next token~\citep{aaronson2023watermarking, kuditipudi2023robust}.
Theoretical studies indicate that detectability depends on the entropy of generated text~\citep{christ2023undetectable, huang2023optimal}. 
Subsequent research suggests watermarking entropic passages, particularly in code~\citep{lee2023wrote}, while other works focus on ``semantic'' watermarks that depend on an entire past text's semantic representation~\citep{liu2023semantic, liu2024adaptive, fu2024watermarking}.

\citet{gu2023learnability} distill the methods used in this study within the model weights, allowing LLMs to generate watermarked logits natively, which is key for open-source models.
In contrast, we focus on unintentional contamination: 
Alice and Bob in Fig.~\ref{fig:fig1} are not collaborating, and Bob consumes only a small proportion of watermarked data.

\vspace{-0.2cm}
\paragraph{Membership inference attacks} 
(MIAs) aim to determine whether an arbitrary sample is included in a model's training data, with varying granularity on the adversary's knowledge~\citep{nasr2019comprehensive}.
Most of the time, the detection either build shadow models and observe a difference in their behavior~\citep{shokri2017membership, song2019auditing, hisamoto2020membership, mahloujifar2021membership} or directly observe the loss of the model~\citep{yeom2018privacy, sablayrolles2019white, watson2021importance, carlini2022membership}.
In the context of generative models, MIAs are intertwined with
\emph{dataset contamination} where one detects that an entire dataset
is part of the training data~\citep{shi2023detecting, golchin2023time}.
MIAs can violate the confidentiality of sensitive training or  reveal training on ``forbidden'' data, like
copyrighted material or evaluation data (which undermines benchmark results).

MIA may be used for radioactivity detection, but with a strong limitation.
Since it focuses on specific pieces of text, Alice in Fig.~\ref{fig:fig1} has to record all the outputs of her LLM.

\vspace{-0.2cm}
\paragraph{IP Protection.} Watermarking can be used for intellectual property protection. 
For instance, \citet{he2022cater,he2022protecting,li2023protecting} use lexical properties like synonyms whereas \citet{peng2023you} rely on backdoors. 
\citet{zhao2023protecting} develop a watermark dissuading model theft via distillation.
Yet its accuracy is empirical, it does not provide $p$-values that align with empirical false positive rates.

\paragraph{Radioactivity.}
\citet{sablayrolles2020radioactive} introduce the concept of \emph{radioactivity}: images are modified to leave a detectable trace in any classifier trained on them.
Our work studies the radioactivity of decoding-based LLM watermarks, which are primarily used to detect AI-generated text with proven accuracy.
We demonstrate that this form of radioactivity is reliably detectable across diverse settings.
Appendix~\ref{sec:ipp} details why existing IP protections or MIAs do not offer similar capabilities.

\subsection{Technical background for LLM watermarking}\label{sec:llm_watermarking}
\newcommand{\sk}{\mathsf{s}}

This paper focuses on watermarking schemes that modify the LLM decoding by hashing a watermark window.
In experiments we use~\citep{aaronson2023watermarking, kirchenbauer2023watermark} due to their omnipresence in the literature, their performance, and their practicality.
We briefly overview them and refer the reader to \autoref{app:watermarking} for details (on the watermark detection tests especially).

We consider a decoder-only LLM that takes as input a context (a sequence of tokens $\left( x^{(-C)}, ..., x^{(-1)} \right)\in \V^C$, $\V$ being the vocabulary of the model) and outputs a vector of logits $\logit\in \R^{|\V|}$.
Vector $\logit$ is transformed into $\mathbf{p}=\text{softmax}(\logit) \in [0,1]^{|\V|}$, the probability distribution of the next token.
The text is generated by sampling the next token $x^{(0)}$ from this distribution %
with some procedure (top-k sampling~\citep{fan2018hierarchical, radford2019language}, nucleus-sampling~\citep{holtzman2019curious}, etc.), then appending it to the context, and repeating the process.

The \textit{watermark embedding} alters the logit vector $\logit$ or the sampling procedure depending on a secret key.
Usually, the output of a secret-key cryptographic function hashes $k$ previous tokens $\left(x^{(-k)},\dots, x^{(-1)} \right)$ (the watermark window) and the secret-key $\sk$.
It serves as a seed for a random number generator, that influences the choice of the next token $x^{(0)}$.
For instance, \citet{kirchenbauer2023reliability} create a pseudo-random ``greenlist'' of tokens with proportion $\gamma$ of the entire vocabulary, whose logits are incremented by a quantity $\delta$, increasing their sampling probability.

The \textit{watermark detection} tokenizes a text, replays the seed generation and scores each token.
The score function on a current token $x^{(0)}$ may therefore be summed up as $W_{\textrm{score}}$ that also takes as input the watermark window $(x^{(-k)},\dots, x^{(-1)} )$, and  depends on the hashing's secret-key $\sk$:
\begin{figure}[h!]
   \vspace*{0.5em}
   \begin{equation}
   \label{eq:watermark_score}
   \eqnmarkbox[Plum]{token}{x^{(0)}} ;\, 
   \eqnmarkbox[Emerald]{window}{\big(x^{(-k)},\dots, x^{(-1)} \big)} 
   \mapsto 
   \eqnmarkbox[BurntOrange]{Wscore}{W_{\textrm{score}}} 
   \left(  
      \eqnmarkbox[Plum]{token2}{x^{(0)}}  ;\,  
      \sk, \eqnmarkbox[Emerald]{window2}{\big(x^{(-k)},\dots, x^{(-1)} \big)} 
   \right) \in \R.
   \end{equation}
   \annotate[yshift=-0.4em]{below,right}{token}{Current token being scored}
   \annotate[yshift=0.4em]{above,right}{window}{Watermark window ($k$ previous tokens)}
   \annotate[yshift=-0.4em]{below,right}{Wscore}{Scoring function (\eg $1$ if green token, $0$ otherwise)}
\end{figure} \\
A statistical test is performed on the cumulative score $S(X_N)$ -- which follows a known distribution in the absence of watermark, see App.~\ref{app:watermarking} -- where \( X_N := [x_1, \ldots, x_N] \)  is a list of $N$ tuples and:
\begin{equation}
\label{eq:def_S_N}
    S(X_N) := \sum_{t=1}^N W_\textrm{score} (    x^{(0)}_t ; \sk, (x^{(-i)}_t)_{i=k}^1 ).
\end{equation}

\vspace{-0.1cm}
\section{Problem Formulation}

\emph{Alice} owns a language model $\A$, fine-tuned for specific tasks such as chatting, problem solving, or code generation, which is available through an API (\autoref{fig:fig1}).
\emph{Bob} owns another language model $\B$.
Alice suspects that Bob fine-tuned $\B$ on some outputs from $\A$.
We denote by $D$ the dataset used to fine-tune $\B$, among which $D^{\A}\subset D$ is made of outputs from $\A$, in proportion $\rho = |D^{\A}|/|D|$.

\vspace{-0.2cm}
\paragraph*{Access to Bob's data.} 
\label{sec:degreeofsupervision}

We consider two settings for Alice's knowledge about Bob's training data:
\begin{itemize}[leftmargin=0.5cm, itemsep=2pt, topsep=1pt]

    \item \emph{\supsetting}: Bob queries $\A$ and  Alice retains all the content $\Tilde{D}^{\A}$ that $\A$ generated for Bob. Thus, Alice knows that $D^{\A} \subseteq \Tilde{D}^{\A}$. 
    We define the \emph{degree of supervision} $d := |D^{\A}|/|\Tilde{D}^{\A}|$,  

    \item \emph{\unsupsetting}: Bob does not use any identifiable account or is hiding behind others such that $|\Tilde{D}^{\A}| \gg |D^{\A}|$ and $d \approx 0$.
    This is the most realistic scenario.
\end{itemize}

Thus, $\rho$ is the proportion of Bob's fine-tuning data which originates from Alice's model
while $d$ quantifies Alice's knowledge regarding the dataset that Bob may have utilized (see Fig.~\ref{fig:datasets}).

\vspace{-0.2cm}
\paragraph*{Access to Bob's model.} 
We consider two scenarios:
\begin{itemize}[leftmargin=0.5cm, itemsep=2pt, topsep=1pt]
    \item Alice has an \emph{open-model} access to $\B$. 
    She can forward any inputs through $\B$ and observe the output logits.
    This is the case if Bob open-sources $\B$, or if Alice sought it via legitimate channels.
    \item Alice has a \emph{closed-model} access. 
    She can only query $\B$ through an API without logits access: Alice only observes the generated texts.
    This would be the case for most chatbots.
\end{itemize}

We then introduce two definitions of radioactivity:
\begin{definition}[Text Radioactivity]\label{def:text_radioactivity}
    Dataset $D$ is $\alpha$-radioactive for a statistical test $T$ if ``$\B$ was not trained on $D$'' $\subset \H_0$ and
    $T$ is able to reject $\H_0$ at a significance level ($p$-value) smaller than $\alpha$.
\end{definition}

\begin{definition}[Model Radioactivity]\label{def:model_radioactivity}
    Model $\A$ is $\alpha$-radioactive for a statistical test $T$ if
    ``$\B$ was not trained on outputs of $\A$'' $\subset \H_0$ and $T$ is able to reject $\H_0$ at a significance level smaller than $\alpha$.
\end{definition}

Thus, $\alpha$ quantifies the radioactivity of a dataset or model. 
A low $\alpha$, e.g. $10^{-6}$, indicates strong radioactivity: the probability of observing a result as extreme as the one observed, assuming that Bob's model was not trained on Alice's outputs, is 1 out of one million. 
Conversely, $\alpha\approx 0.5$ means that the observed result is equally likely under both the null and alternative (radioactive) hypotheses.

\newcommand{\amark}{\ensuremath{\sim}} %
\begin{figure}[b!]
   \begin{minipage}{0.62\textwidth}
        \centering
        \resizebox{1.0\linewidth}{!}{
        \begin{tabular}{r ccc ccc ccc}
            \toprule
            & \multicolumn{2}{c}{With WM} & \multicolumn{2}{c}{Without WM (MIA)} & \multicolumn{2}{c}{IPP} \\
            \cmidrule(lr){2-3} \cmidrule(lr){4-5} \cmidrule(lr){6-7}
            & Open & Closed & Open & Closed & Open & Closed \\
            \Supsetting\ & \cmark & \cmark &\cmark & \xmarkg & \cmark & \amark \\
            \Unsupsetting\ & \cmark & \cmark &\xmarkg & \xmarkg & \xmarkg & \xmarkg \\
            \bottomrule
        \end{tabular}
        }
        \captionof{table}{
            Availability of radioactivity detection under the different settings. 
            \textit{Open} / \textit{closed-model} refers to the availability of Bob's model, and \textit{\supsetting} / \textit{\unsupsetting} to Alice's knowledge of his data.
            Detection with watermarks is described in Sec.~\ref{sec:radioactivity_detection}, and a baseline without WM relying on MIA in App.~\ref{par:mia_wm}. 
            Intellectual Property Protection (IPP) refers to~\citet{zhao2023protecting}; see App.~\ref{sec:ipp}.    
        }
        \label{tab:summary_MIA_wm}
   \end{minipage}\hfill
    \begin{minipage}{0.33\textwidth}
        \centering
        \includegraphics[width=1.0\textwidth, clip, trim=0 2.2cm 0 0]{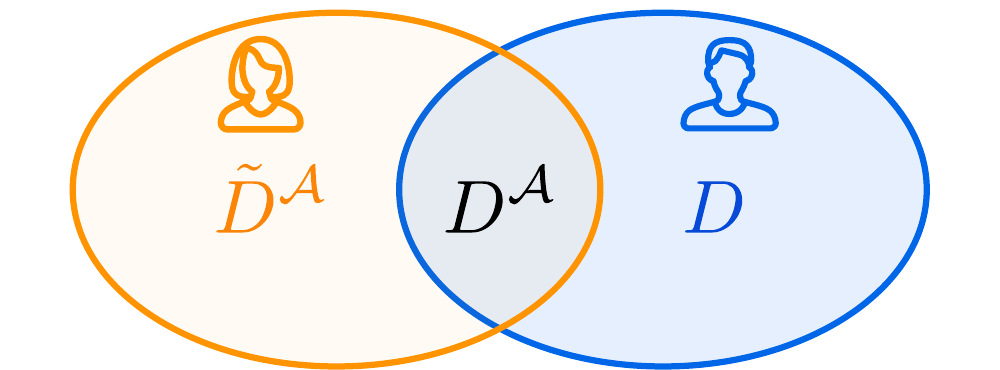}
        \captionsetup{font=small}
        \caption{
            Detection performance mainly depends on $\rho = |D^{\A}|/|D|$ and $d = |D^{\A}|/|\Tilde{D}^{\A}|$, where $D$ is the fine-tuning dataset used by Bob, $\Tilde{D}^{\A}$ are the outputs from Alice's model, and $D^{\A}$ the intersection of both.
        }
        \label{fig:datasets}
    \end{minipage}
\end{figure}

\vspace{-0.1cm}
\section{Radioactivity Detection}\label{sec:radioactivity_detection}

We build radioactivity detectors for all settings as shown in Tab.~\ref{tab:summary_MIA_wm}.
The outputs of $\A$ are watermarked with a method $W$ with Alice's secret key $\sk$ as described in Sec.~\ref{sec:llm_watermarking}. 
Note that, unlike watermark detection which takes text as input, the input for radioactivity detection is a model.

\subsection{Theoretical validity of statistical tests for radioactivity detection}\label{sec:statistical_test}

In the following, we construct a radioactivity test on $\B$ based on the detection score of Alice's watermarking method.
We focus on~\citet{kirchenbauer2023watermark} with the notations of Sec.~\ref{sec:llm_watermarking} for simplicity.
Each \(x_i\) in $X_N$ is a \((k+1)\)-tuple of tokens \((x_i^{(-k)}, \ldots, x_i^{(0)})\) and \(x_i^{(0)}\) is generated by $\B$ from the preceding tokens \( (x_i^{(-C_i)}, \ldots, x_i^{(-k)}, \ldots, x_i^{(-1)} ) \), where \( ( x_i^{(-C_i)}, \ldots, x_i^{(-k-1)}) := c_i\) is an optional context, e.g. prompt.
Note that the contexts \((c_1,\ldots,c_N)\) are crafted by Alice to better detect radioactivity in \((x_1,\ldots,x_N)\). As we see in Sec.~\ref{sec:detection-setup}, this can imply that they are watermarked themselves.
In this scenario, an accurate statistical test for detecting radioactivity can be performed through de-duplication.
Let \(\mathcal{H}_0\) be ``\(S(X_N)\) follows a binomial distribution \(B(N, \gamma)\)''.
Then:

\begin{proposition}\label{prop:test-validity}
``B was not trained on Alice’s watermarked data'' $ \subset \H_0$ \uline{if} tokens are \textit{de-duplicated}: (1) $(x_i)_{i \leq N}$ are pairwise distinct and (2) for each $1 \leq i \leq N$, $x_i$ is not in the context $c_i$.
\end{proposition}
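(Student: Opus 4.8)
Write $Z_i := W_{\textrm{score}}(x_i^{(0)}; \sk, w_i)\in\{0,1\}$, where $w_i := (x_i^{(-k)},\dots,x_i^{(-1)})$ is the watermark window of the $i$-th scored token, so that $S(X_N)=\sum_{i=1}^N Z_i$. Since $B(N,\gamma)$ is exactly the law of a sum of $N$ i.i.d. $\mathrm{Bernoulli}(\gamma)$ variables, the plan is to show that, on the event ``$\B$ was not trained on Alice's watermarked data'' and under the de-duplication conditions (1)--(2), the family $(Z_i)_{i\le N}$ is i.i.d. $\mathrm{Bernoulli}(\gamma)$. I would model the keyed hash producing the greenlists as a random oracle, so that distinct inputs yield independent uniform seeds; then each greenlist $G_\sk(w)\subset\V$ is a uniformly random subset of proportion $\gamma$, drawn independently across distinct windows and — crucially under $\H_0$ — independently of $\B$, since Bob never sees $\sk$. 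I would also invoke the standard idealization behind the binomial test of \citet{kirchenbauer2023watermark}: conditionally on its window, a token belongs to the greenlist with probability $\gamma$, independently across tokens.

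\textbf{Marginal law.} First I would fix $i$ and argue $\Prob[Z_i=1]=\gamma$. The token $x_i^{(0)}$ is emitted by $\B$ from $(c_i,x_i^{(-k)},\dots,x_i^{(-1)})$. Even if Alice watermarked $c_i$ (so $c_i$ depends on $\sk$), the scored token $x_i^{(0)}$ can be correlated with the greenlist $G_\sk(w_i)$ only if the $(k+1)$-tuple $x_i=(w_i,x_i^{(0)})$ was itself produced by Alice's watermarked model immediately after an occurrence of the window $w_i$ — that is, only if $x_i$ occurs inside $c_i$. Condition (2) rules this out, so $x_i^{(0)}$ is conditionally independent of $G_\sk(w_i)$, giving $\Prob[Z_i=1]=\Prob[x_i^{(0)}\in G_\sk(w_i)]=\gamma$.

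\textbf{Joint independence.} Next I would reveal $Z_1,\dots,Z_N$ in order and show that the conditional law of $Z_i$ given $(Z_1,\dots,Z_{i-1})$ is $\mathrm{Bernoulli}(\gamma)$. If $w_i$ is new (absent from $w_1,\dots,w_{i-1}$), the random oracle supplies a fresh seed, so $G_\sk(w_i)$, and hence $Z_i$, is independent of the past; combined with the marginal computation, $Z_i\sim\mathrm{Bernoulli}(\gamma)$ independently of $(Z_1,\dots,Z_{i-1})$. If $w_i$ repeats an earlier window, condition (1) ensures $x_i^{(0)}$ differs from every token already scored at that window (the $(k+1)$-tuples are pairwise distinct), so by independence of greenlist membership across distinct tokens of a fixed window, $Z_i=\1[x_i^{(0)}\in G_\sk(w_i)]$ is again $\mathrm{Bernoulli}(\gamma)$ and independent of the indicators previously revealed for that window. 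Inducting on $i$ yields $(Z_i)_{i\le N}$ i.i.d. $\mathrm{Bernoulli}(\gamma)$, hence $S(X_N)\sim B(N,\gamma)$, which is the claimed inclusion in $\H_0$.

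\textbf{Main obstacle.} The delicate part is the joint-independence step, and in particular seeing that conditions (1) and (2) play distinct, complementary roles: (2) makes each $Z_i$ individually unbiased even though Alice may watermark the prompts she feeds to $\B$, while (1) kills the dependence introduced when the scheme reuses the same greenlist for a recurring watermark window. Pinning down the random-oracle / independent-greenlist modeling, and verifying that $\B$'s ignorance of $\sk$ genuinely makes the scored token independent of its greenlist in every admissible context configuration, is the crux; the rest is bookkeeping.
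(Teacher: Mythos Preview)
Your proof is correct and follows essentially the same approach as the paper's: both argue that (2) prevents the scored tuple from merely echoing a watermarked tuple in the prompt so each $Z_i$ is unbiased, that (1) makes the $(k+1)$-tuples distinct so the increments can be treated as independent, and both invoke the ideal-hash / random-oracle model to decouple greenlists across windows. Your treatment is more explicit than the paper's (in particular the case split on new versus repeated windows and the appeal to independence of greenlist membership across distinct tokens), but the underlying argument is the same.
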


The proof and a formal statement of this result are provided in App.~\ref{app:test-correctness-proof}.

With this inclusion, we can thus use a statistical test $T$ for hypothesis $\H_0$ to test against radioactivity. 
Specifically, assuming conditions (1) and (2) are met, the $p$-value $P(S(X_N)>t \mid \H_0)$ can be accurately computed using the regularized incomplete beta function $I_{\gamma}(t+1, N-t)$ detailed in App.~\ref{app:watermarking}.
The necessity of condition (1) to simulate i.i.d. scores is also used in classical watermark detection~\citep{fernandez2023three, kirchenbauer2023watermark}, but is more pronounced in our scenario, given the larger volume of tokens required to observe radioactivity.
(2) however is not necessary in classical watermark detection, as the prompt used to generate text is assumed not to be watermarked.
In our case,  we show in Sec.~\ref{sec:instruction} and App.~\ref{app:correctness} that both are critical in order to get reliable $p$-values.

\subsection{Radioactivity detection in practice}

\paragraph{Naive approach.} 
Text radioactivity can be detected by running $T$ on any
text generated by $\B$. 
However, this detector is weak because radioactivity can only be observed on an $x_i$ if it was part of $\A$’s watermarked outputs in $\B$’s training data.
This is because there is no correlation between the green lists of different watermark windows, since each partition is only a function of the watermark window through a hash function.

\paragraph{Overview.} 
We use the detection test $T$ detailed in Sec~\ref{sec:statistical_test} on $X_N$ generated through carefully crafted prompts, as our goal is to reduce noise by focusing on contexts likely to lead to radioactivity signals.
To this end, we recreate contexts similar to the ones that generated the watermarked text by Alice. 
We ensure the accuracy of statistical tests through de-duplication of scored tokens.

\begin{figure*}[b]
    \centering
    \vspace{-0.2cm}
    \includegraphics[width=1.0\textwidth, clip, trim=0 1.5cm 3.7cm 0]{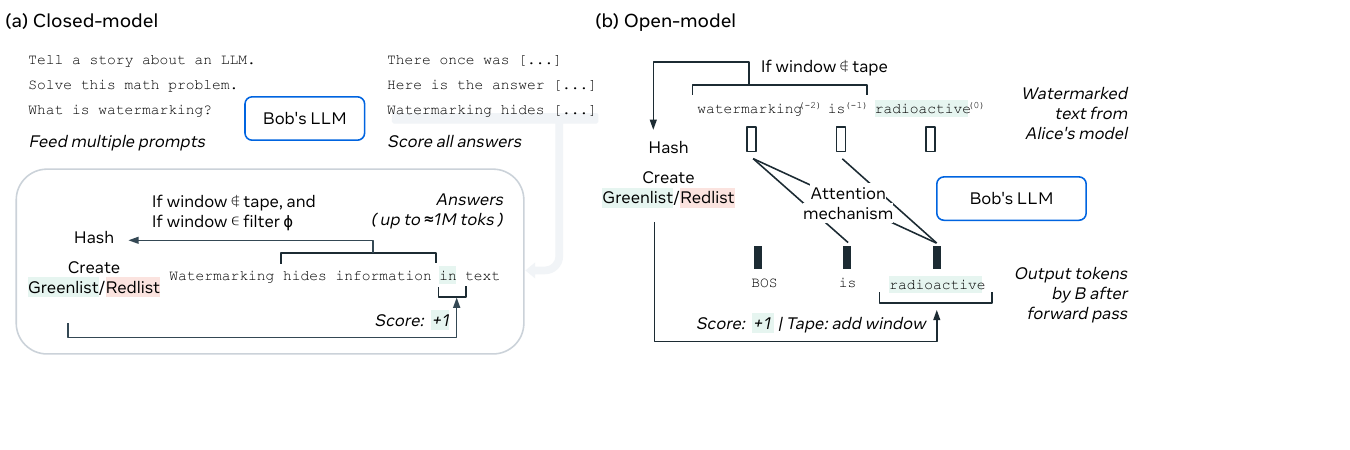}
    \caption{
    Radioactivity detection with closed or open model access (for simplicity, only \citep{kirchenbauer2023watermark} is illustrated).
    \textit{(Left)} New texts are generated from $\B$ using prompts from $\A$ and these texts are scored. 
    The filter $\phi$ is used to focus the score computation on likely contaminated $k$-grams. 
    \textit{(Right}) Text generated by $\A$ are directly forwarded through $\B$, and the next-token predictions are scored using tokens from the input as the watermark window.
    In both cases, the \textit{tape} ensures reliable $p$-values by de-duplicating scored tokens.
    }
    \label{fig:method}\label{fig:open_model}
\end{figure*}

\paragraph{Radioactivity detection in $\B$.}
To amplify the signals of radioactivity, we employ two strategies.
(1) In the \supsetting\ setting, we use watermarked text from $\Tilde{D}^\A$. In the \unsupsetting\ setting, we use other watermarked text generated by $\A$ that aligns with the suspected training distribution of $\B$ (e.g., English dialogues), but that was not used by $\B$.
(2) We score up to millions of tokens, orders of magnitudes more than usual.
The scoring depends on the access to $\B$:
\begin{itemize}[leftmargin=*, itemsep=1pt, topsep=2pt]
    \item \emph{closed-model}: we use the prompts to generate new texts from $\B$, and score these texts.
    \item \emph{open-model}, ``reading mode'': instead of generating completions with $\B$, we directly forward the text generated by $\A$ through $\B$, as depicted in Fig.~\ref{fig:open_model}.
    We then score next-token predictions with $W_{\textrm{score}}$ by using tokens from the input as watermark windows.
    Intuitively, for each green list token present in $\A$'s watermarked text, the context that led to its generation is reproduced. 
    This allows Alice to focus her analysis on how $\B$ responds to these specific contexts, which are more likely to lead to radioactivity signals than more generic ones.
\end{itemize}

\vspace{-0.2cm}
\paragraph{Filter on scored $k$-grams.}
To further improve detection in the closed-model setting where the reading mode is not possible, we only score $(k+1)$-tuples $x_i$ output by $\B$ for which the watermark window is often in $\A$’s watermarked outputs.
We thus introduce a filter $\phi$, a set that contains these watermark windows.
In the \emph{\supsetting} setting ($0<d\leq1$), $\phi$ is made of the $k$-grams present in $\Tilde{D}^\mathcal{A}$ (refer to Fig.~\ref{fig:datasets}).
In the \emph{\unsupsetting} setting, we focus on `likely' contaminated $k$-grams, \eg $k$-grams appearing in (new) watermarked text generated by $\A$.

\vspace{-0.2cm}
\paragraph{Token scoring and de-duplication.}
We score a token only if the same ($k+1$)-tuple $x_i$ has not been previously encountered.
Moreover, in the closed-model setting, we only score watermark windows ($k$-tuple) that are not part of the (watermarked) prompt. 
In the open-model setting, tokens with watermarked windows previously present in the attention span are not scored. 
This is achieved by maintaining a \emph{tape} memory of all such $k$-grams combinations during detection.
These adjustments ensure reliable $p$-values even when many tokens are analyzed.
This is empirically validated in Sec.~\ref{par:dedup-expe}, and additional details on the correctness of our tests are provided in App.~\ref{app:scoring}.

\vspace{-0.2cm}
\section{Radioactivity in Instruction Datasets}
\label{sec:instruction}
\vspace{-0.2cm}

This section considers a realistic scenario where a pre-trained LLM $\mathcal{B}$ is instruction fine-tuned on instruction/answer pairs generated by $\mathcal{A}$.
It shows that watermarked instructions are radioactive and compares the confidence of our different detection methods.

\subsection{Experimental setup of the instruction tuning}\label{sec:exp_setting}

\paragraph*{Instruction data generation.} 
We follow the Self-Instruct protocol~\citep{wang2022self} with $\mathcal{A}$=Llama-2-chat-7B~\citep{touvron2023llama2} (results hold for bigger teachers, see App.~\ref{app:bigger-teachers}).
We prompt the model with an instruction followed by three examples of instruction/answer pairs and ask it to generate the next $20$ instruction/answer pairs.
The sampling from the LLM logits is done with or without the watermarking method of~\citet{kirchenbauer2023watermark}, at logit bias $\delta=3.0$, proportion of greenlist tokens $\gamma=0.25$, and $k=2$. 
In both cases, we use nucleus sampling~\citep{holtzman2019curious} with $p=0.95$ and $T=0.8$.
We post-process the generated data to remove unfinished answers and near-duplicate instructions.
This yields a dataset of 100k instruction/answer pairs ($\approx$14M tokens).
Appendix~\ref{app:self_instruct} shows examples of these instructions and the  watermark detection rates.

Finally, we create six mixed datasets with $\rho$ \% of watermarked data (with $\rho \in \{ 0, 1, 5, 10, 50, 100\}$), filling the rest with non-watermarked instructions.
Thus, the total number of instructions is fixed at around 14M tokens, but the proportion of watermarked ones (which represents $\A$'s outputs) varies.

\vspace{-0.2cm}
\paragraph*{Fine-tuning.} 
We train $\mathcal{B}$ on these six datasets, closely following the approach of Alpaca~\citep{alpaca}:
we use AdamW~\citep{loshchilov2017decoupled} for 3000 steps, with a batch size of 8, a learning rate of $10^{-5}$ and a context size of 2048 tokens (which results in 3 training epochs).
The learning rate follows a cosine annealing schedule~\citep{loshchilov2017sgdr} with 100 warmup steps.
We set $\mathcal{B}$=Llama-1-7B~\citep{touvron2023llama}, a model trained on different datasets than $\mathcal{A}$=Llama-2, to avoid biases that could arise if the same base model were also used for fine-tuning.

\vspace{-0.2cm}
\subsection{Quality inspection of the instruction tuning}\label{sec:quality-inspection}

\begin{table}[t]
\centering
\begin{minipage}{0.57\textwidth}
    \caption{
    Evaluation of Llama-7B fine-tuned with varying proportions of watermarked instruction data.
    }\vspace{0.1cm}
    \label{tab:nlp_bench}\label{tab:perf}
    \footnotesize %
    \begin{tabular}{p{0.4cm}@{\hskip 16pt} | *{2}{p{0.5cm}} *{1}{p{0.7cm}} *{2}{p{0.5cm}} | p{0.6cm}}
        \toprule
        {} & \small{NQ} & \small{TQA} & \small{GSM8k} & \small{H.Eval} & \small{Avg.} & \small{MMLU}\\
        \midrule 
        \multicolumn{5}{l}{\small{\textit{Fine-tuned with $\rho$ \% of watermarked data:}}} \\
        $0\%$ & $5.0$ & $33.6$ & $11.8$ & $12.8$ & $15.8$ & $33.6$ \\
        $5\%$ & $5.2$ & $35.7$ & $11.2$ & $11.6$ & $15.9$ & $34.7$ \\
        $50\%$ & $4.1$ & $35.5$ &  $9.6$ & $12.8$ & $15.5$ & $35.0$ \\
        $100\%$ & $5.6$ & $36.4$ & $11.1$ &  $9.8$ & $15.7$ & $31.0$ \\
        \midrule
        Base & $3.2$ & $36.2$ & $10.5$ & $12.8$ & $15.7$ & $28.4$ \\
        \bottomrule
    \end{tabular}%
\end{minipage} \hfill
\begin{minipage}{0.38\textwidth}
    \centering
    \caption{
    Detection confidence $\logpval$ with varying supervision $d$ at $\rho=5\%$ of $\B$'s training data from $\A$, in the \textit{open}-model setting using the reading mode for detection (see Fig.~\ref{fig:open_model}).
    }\vspace{0.1cm}
    \label{tab:wm_supervision}
    \footnotesize %
        \begin{tabular}{l @{\hskip8pt} c}
            \toprule
            Supervision degree $d$ & $\logpval$ \\
            \midrule
            $0.1\%$ & $-5.8$\aux{1.8} \\
            $1\%$ & $-6.5$\aux{0.9} \\
            $5\%$ & $-16.0$\aux{2.6} \\
            $10\%$ & $<-30$ \\
            \bottomrule
    \end{tabular} 
\end{minipage}
\vspace{-0.3cm}
\end{table}

\begin{figure}[b]
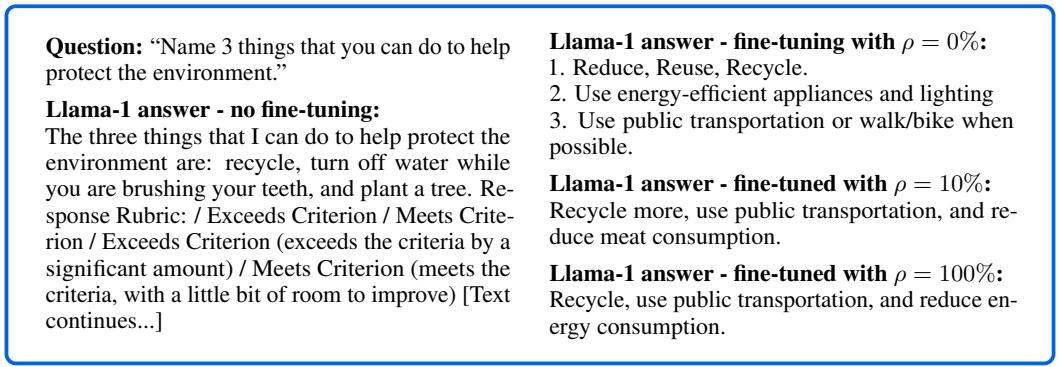

    \vspace{-0.4cm}
    \centering
    \begin{tcolorbox}[colframe=metablue, colback=white]
        \footnotesize
        \begin{minipage}{0.48\textwidth}
            \textbf{Question:} ``Name 3 things that you can do to help protect the environment.''
            \\[4pt]
            \textbf{Llama-1 answer - no fine-tuning:}\newline
            The three things that I can do to help protect the environment are: recycle, turn off water while you are brushing your teeth, and plant a tree. Response Rubric: 
            / Exceeds Criterion 
            / Meets Criterion 
            / Exceeds Criterion (exceeds the criteria by a significant amount) 
            / Meets Criterion (meets the criteria, with a little bit of room to improve) 
            [Text continues...]
        \end{minipage}\hspace{0.04\textwidth}%
        \begin{minipage}{0.48\textwidth}
            \textbf{Llama-1 answer - fine-tuning with $\rho=0\%$:}\newline
            1. Reduce, Reuse, Recycle. \newline 2. Use energy-efficient appliances and lighting \newline 3. Use public transportation or walk/bike when possible.
            \\[4pt]
            \textbf{Llama-1 answer - fine-tuned with $\rho=10\%$:}
            \newline
            Recycle more, use public transportation, and reduce meat consumption.
            \\[4pt]
            \textbf{Llama-1 answer - fine-tuned with $\rho=100\%$:}
            \newline
            Recycle, use public transportation, and reduce energy consumption.
        \end{minipage}
    \end{tcolorbox}
    \vspace{-0.2cm}
    \caption{
        Answers generated from Bob's model $\B$ (Llama-1), fine-tuned on instruction data generated by Alice's model $\A$ (Llama-2-chat) with different proportions $\rho$ of watermarked data.
        The quality of the instruction-tuning is not affected by the watermarking of the data (examples of training instruction/answer pairs are in Fig.~\ref{fig:example_self_instruct}).
    }
    \label{fig:example_answers_main}
\end{figure}

Alice's watermarking hyperparameters aim 
at 1) generating high-quality instructions and 2) ensuring that the watermark can be detected even in small text segments:
the watermark window size is $k=2$, sufficiently wide to eliminate biases yet narrow enough to make the watermark robust to edits;
$\delta=3$ yields high-quality text while ensuring that the watermark can be detected with a $p$-value of $10^{-6}$ on approximately 100 tokens (full results in App.~\ref{app:eval-wm-self-instruct}).

We inspect the outputs of the fine-tuned model $\mathcal{B}$ both qualitatively (see examples in Fig.~\ref{fig:example_answers_main} and App.~\ref{app:self_instruct}) and quantitatively in Tab.~\ref{tab:nlp_bench}.
We report 0-shot scores for an evaluation setup close to that of Llama: exact match score for Natural Questions~\citep{kwiatkowski2019natural} and TriviaQA~\citep{joshi2017triviaqa}; 0-shot exact match score without majority voting for GSM8k~\citep{cobbe2021training}; pass@1 for HumanEval~\citep{chen2021Evaluating}; and accuracy on MMLU~\citep{hendrycks2020measuring}.
As expected, instruction-tuning does not affect most benchmarks while enhancing it for MMLU, as in \citep{dettmers2023qlora}.
Thus, watermarking does not significantly impact the fine-tuned model performance.

\begin{figure}[b]
    \begin{minipage}{0.48\textwidth}
        \includegraphics[width=0.95\linewidth, clip, trim=0 0 0 0]{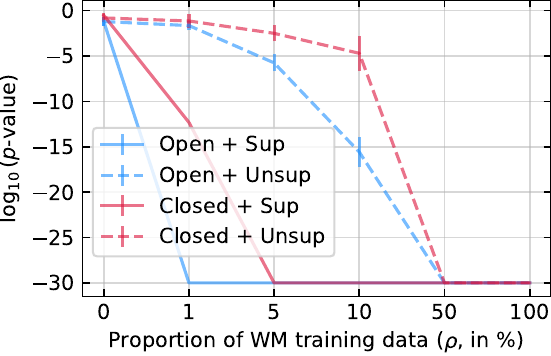}
        \caption{
            Radioactivity detection results.
            Average of $\logpval$ over 10 runs ($\downarrow$ is better). Bars indicate standard deviations.
            The detection methods are detailed in Sec.~\ref{sec:radioactivity_detection}.
            In the \supsetting\ closed-model setting, our tests detect radioactivity ($p<10^{-5}$) when only $1\%$ of training data is watermarked.
            In the absence of watermarked data, all tests output random $p$-values.
        }
        \label{fig:wm-proportion}
    \end{minipage} \hfill
    \begin{minipage}{0.48\textwidth}
    \centering
        \hspace{-0.2cm}
        \includegraphics[width=0.95\linewidth]{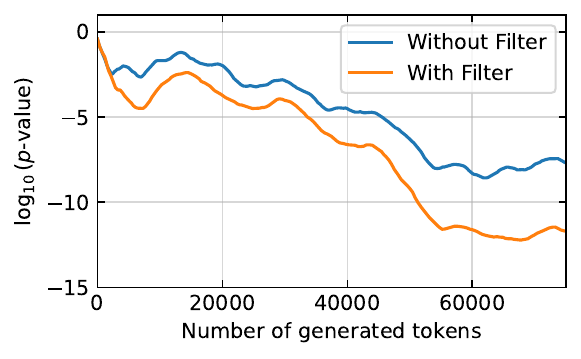}
        \caption{
            Influence of the filter on scored tokens.
            $\logpval$ as a function of the number of generated tokens in the \supsetting\ closed-model setting with $\rho = 1\%$. 
            We perform the watermark detection test on text generated by $\B$ with prompts from $\Tilde{D}^\A$. 
            When filtering, we only score $k$-grams that were part of $\Tilde{D}^\A$.
        }
        \label{fig:filter_non_filter_nbtok}
    \end{minipage}
\end{figure}

\subsection{Experimental setup of the detection}\label{sec:detection-setup}

In the \textbf{open-model} setting, we use a set of watermarked instructions generated by Alice's model $\A$ to score $N=225$k tokens.
For the \supsetting\ setting ($d=1$), we directly use all the $\rho\%$ watermarked texts among the 100k instructions used to train $\B$; for the \unsupsetting\ setting ($d=0$), we use watermarked instructions unused by $\B$.
We use the ``reading mode'' for detection (see Sec.~\ref{sec:radioactivity_detection}).

In the \textbf{closed-model} setting, $\B$ is only accessible via an API.
We prompt $\B$ with instructions generated by $\A$, concatenate all the answers, and score $N=600$k tokens, after filtering and de-duplicating the $k$-grams of $\approx1.5$M generated tokens.
This represents around $10^4$ queries if we assume an answer is $100$ tokens.
We score a token only if its previous $k$-gram is part of filter $\phi$. 
In the \supsetting\ setting ($d>0$), we collect the watermarked prompts/answers from $\Tilde{D}^\A$, part of which were used for fine-tuning, and define $\phi$ as the set of all the $k$-grams present in it.
In the \unsupsetting\ setting, we generate $100$k new watermarked instructions with $\A$ and save all $k$-grams into $\phi$.
In both cases, we run the detection $10$ times on different chunks of text and report averaged results.

\subsection{Detection results}

\paragraph{Proportion of watermarked data and access to Bob's model and data.}
\autoref{fig:wm-proportion} first presents the $p$-values of our tests for different proportions $\rho$, under the 4 possible scenarios, and shows that the detection confidence increases with the proportion of watermarked data, and with Alice's knowledge.

The \supsetting\ setting ($d=1$) is straightforward: radioactivity is detected with a $p$-value smaller than $10^{-30}$ (resp. $10^{-10}$) in the open-model (resp. closed-model) setting, even if only $1\%$ of Bob's fine-tuning data originated from $\A$.
Indeed, with open-model access, we only score 1) $k$-grams that can actually be contaminated and 2) within a context that matches the one seen during fine-tuning.
In the closed-model access, prompting $\B$ with its training questions enhances radioactivity detection, as its answers are likely to reproduce the watermarked responses from $\A$ used during training.

In the \unsupsetting\ setting ($d=0$), our open-model radioactivity detection test still yields $p<10^{-5}$ when no more than $5\%$ of the instructions used to fine-tune $\B$ originate from Alice's model.
The detection is done on a corpus of texts that does not contain samples seen by Bob at training time.
However, it contains watermark windows that likely overlap with Bob's training data, on which radioactivity may be detected.
In the closed-model setting, the detection is less powerful: it requires a higher proportion of watermarked data to be detected with a $p$-value smaller than $10^{-5}$.

\paragraph{Influence of the degree of supervision.}
To study the intermediate regime of weak supervision, we fix $\rho=5\%$ in the open-model detection. 
We vary the supervision degree $d$ by mixing the watermarked data used for fine-tuning with new watermarked data generated by $\A$ in a ratio $d$ to simulate $\Tilde{D}^\A$. 
We then run the radioactivity detection test with the reading mode on this corpus.
\autoref{tab:wm_supervision} shows that the detection confidence increases with the supervision degree (from $-5.8$ to $<-30$ when $d$ goes from $0.1\%$ to $100\%$).
Even for very weak supervision, the detection is still effective, with a $p$-value smaller than $10^{-5}$ when $d=0.1\%$.
This is in stark contrast with MIA-based methods for which supervision is necessary (we detail this in Fig.~\ref{fig:mia-comparative-analysis} of App.~\ref{par:mia_wm}).

\vspace{-0.2cm}
\paragraph{Influence of the filtering in the closed-model setting.}
\autoref{fig:filter_non_filter_nbtok} compares detection with and without the $\phi$ filter when $1\%$ of fine-tuning data is watermarked in the \supsetting\ closed-model setting (with $d=1$).
We plot the $\log_{10}(p\textrm{-value})$ against the number of generated tokens.
As expected, the detection confidence increases with the number of tokens.
Moreover, filtering consistently brings improvements: after scoring $75000$ tokens, the $\logpval$ equals $-12$ with filter and $-8$ without.
Filtering appears particularly important to increase the detection confidence on the worst-case scenarios (see the largest $\pval$ observed over the 10 runs in Fig.~\ref{fig:box_plot} of App.~\ref{appendix:experiments}).

\paragraph{Influence of the de-duplication on the correctness of radioactivity tests.}\label{par:dedup-expe}
For a statistical test to be valid, the $p$-value should be uniformly distributed between 0 and 1 under the null hypothesis $\H_0$ (mean $0.5$, standard deviation of $\approx0.28$).
Accurate theoretical $p$-values are particularly important in the common case where obtaining samples of fine-tuned $\B$ is expensive. 
This cost limits the sample size, reducing the power of empirical tests and compromising the confidence in their results. 

We validate our tests and highlight the importance of de-duplication by observing the empirical $p$-value distribution after scoring 1M tokens when $\B$ is not trained on watermarked data ($\H_0$). 
We first run 10 detection tests when scoring distinct $\{$watermarked window + current token$\}$ combinations, as suggested by~\citep{kirchenbauer2023watermark, fernandez2023three}. 
This approach is insufficient on its own, as shown in Tab.~\ref{tab:pval_h0} in the ``without de-duplication'' column.
For instance,  in the closed-model setting, the average $\pval$ is inferior to $10^{-30}$ even if the model does not output watermarked texts.
This is a strong false alarm, \ie incorrectly rejecting the null hypothesis.
The additional de-duplication rules (Sec.~\ref{sec:radioactivity_detection}) resolve this issue. 
We observe in the ``with de-duplication'' column that the empirical $p$-values match the theoretical ones, indicating that the test results are valid and reliable. 
These statistics, computed across various secret keys and seeds, are detailed in App.~\ref{app:repporting}. 
App.~\ref{app:correctness} gives additional details, supported by additionnal experiments, on the importance of de-duplication.

\subsection{Intermediate summary \& discussion}
Our watermark-based radioactivity detection methods can identify $10^{-5}$-radioactivity in model $\B$ across various setups. 
Even in the most realistic scenario -- \unsupsetting\ access to Bob's data -- our method holds true with only closed access to $\B$, given that at least $10\%$ of the data is watermarked. 
Open-model access further enhances the test's statistical significance, detecting radioactivity with a $p$-value smaller than $10^{-10}$ when $10\%$ of the fine-tuning data is watermarked (Fig.~\ref{fig:wm-proportion}).
Note that we deliberately score different numbers of tokens for the open and closed-model settings. 
It shows that the open-model is more efficient since it requires fewer tokens to achieve lower $p$-values (see Fig.~\ref{fig:wm-proportion}) . 
Furthermore, since $p$-values plateau beyond a certain threshold (see Fig.~\ref{fig:filter_non_filter_nbtok}), additional token scoring in the open-model setting does not significantly enhance detection.

\textbf{Other approaches.}
The applicability of other approaches is summarized in Tab.~\ref{tab:summary_MIA_wm}: ``\xmarkg'' means that no method in the literature currently tackles this problem with LLMs, and ``$\sim$'' means that methods that address the problem have strong technical issues (the statistical guarantees do not hold).
Indeed, we show in App.~\ref{par:mia_wm} that other passive methods like Membership Inference Attacks (MIAs) are effective only in the \textit{\supsetting} setting, where Alice has precise knowledge of the data used to train Bob's model and \textit{open} access to it. 
In that scenario, she can demonstrate $10^{-30}$-radioactivity, providing strong evidence that Bob has trained on her model. 
App.~\ref{sec:ipp} demonstrates that even state-of-the-art \emph{active} protection methods~\citep{zhao2023protecting} fall short in the \unsupsetting\ setting.

Besides, our tests provide reliable $p$-values thanks to meticulous de-duplication.
When $\B$ is not trained on watermarked data, the $p$-values are not overly low (Fig.~\ref{fig:wm-proportion}, Tab.~\ref{tab:pval_h0}, App.~\ref{app:scoring}). 
App.~\ref{sec:ipp} shows that this is not the case for recent synonym replacement-based watermarks~\citep{he2022protecting, he2022cater}.

\renewcommand{\aux}[1]{{\scriptsize \textcolor{gray}{$\pm$#1}}}
\begin{table}[t!]
    \begin{minipage}{0.44\linewidth}
        \centering
        \renewcommand{\arraystretch}{1.1}
        \caption{
        Average $p$-values under $\H_0$ ($\B$ \emph{not} trained on watermarked data: $p$ should be 0.5). 
        In the open-model setting (resp. closed), we exclude a token if the same watermark window is already present in the attention span (resp. in the watermarked prompt).
        Without de-duplication, $p$-values are overly low: the test does not work.
        }\label{tab:pval_h0}
        \footnotesize %
        \begin{tabular}{l @{\hskip8pt} c c}
            \toprule
            & \multicolumn{2}{c}{De-duplication} \\
            \cmidrule(rr){2-3}
            Access to Model & With & Without \\
            \midrule
            Open & 0.46\aux{0.27} & 0.053\aux{0.12} \\
            Closed & 0.42\aux{0.30} & $<10^{-30}$ \\
            \bottomrule
        \end{tabular}
    \end{minipage} \hfill
    \begin{minipage}{0.53\linewidth}
    \centering
        \caption{
            Influence of watermarking method and $k$ on radioactivity. Average $\log_{10}$ $p$-values.
            ``Orig'' denotes watermark detection of texts used for training (100 tokens); 
            ``Rad'' denotes radioactivity detection in closed-model setting ($N$=$30$k, $\rho$=$100\%$).
            Both KGW~\citep{kirchenbauer2023reliability} and AK~\citep{aaronson2023watermarking} behave the same way and lower $k$ increases radioactivity. 
        }\label{tab:exp_kgram}
        \footnotesize %
        \renewcommand{\arraystretch}{1.1}
        \begin{tabular}{r r c c c}
            \toprule
           \multicolumn{2}{c}{Window Size $k$} &  1 & 2 & 4 \\ %
           \hline 
           \multirow{2}{*}{KGW} & Orig & -8.6\aux{4.4} & -6.4\aux{3.9} & -6.7\aux{4.0} \\
                                & Rad & -43.7\aux{7.4} & -10.2\aux{3.0} & -1.4\aux{0.6} \\
           \hline
           \multirow{2}{*}{AK}  & Orig & -7.7\aux{4.5} & -7.6\aux{5.1} & -7.1\aux{5.1} \\
                                & Rad & -47.2\aux{4.5} & -18.4\aux{2.8} &  -2.8\aux{3.2} \\
            \bottomrule
        \end{tabular}
    \end{minipage}
    \vspace{-0.5cm}
\end{table}

\vspace{-0.1cm}
\section{Investigating Radioactivity}\label{sec:fine-tuning-abl}

This section further studies what influences radioactivity from three angles: 
fine-tuning, watermarking algorithm, and data distribution. 
We also explore how Bob can try to remove radioactivity traces.
Other scenarios such as multi-bit watermarking~\citep{yoo2024advancing} and other influencing factors, such as the size of model $\A$ or mixes of different fine-tuning data, are studied in App.~\ref{appendix:experiments}.

\begin{table}
	\centering
    \caption{
        Influence of the model fine-tuning on the radioactivity.
        We report the $\logpval$ for $10$k scored observations (lower means more radioactive).
        {\setlength{\fboxsep}{2pt}\colorbox[HTML]{F2F2F2}{Gray}} indicates values used in Sec.~\ref{sec:instruction}.
    }
    \vspace{-0.1cm}
    \label{tab:ft-abl}
    \renewcommand{\arraystretch}{1.2}
    {\small
  	\subfloat[ Learning rate.]{
            \centering
            \hspace{-0.03\linewidth}
            \begin{minipage}{0.23\linewidth}
                {\begin{center}
                    \vspace{-0.2cm}\begin{tabular}{ccc}
                       \colorcell  $10^{-5}$ & $5\cdot 10^{-5}$ & $10^{-4}$\\
                        \shline
                       \colorcell  -32.4 & -49.6 & -58.0\\
                    \end{tabular}
                \end{center}}
            \end{minipage} 
        } \hspace{0.05\linewidth}
  	\subfloat[ Epoch.]{
            \centering
            \begin{minipage}{0.27\linewidth}
                {\begin{center}
                    \vspace{-0.2cm}\begin{tabular}{ *{4}{c} }
                        1 & 2 & \colorcell 3 & 4 \\
                        \shline
                        -20.8 & -29.2 & \colorcell -33.2 & -34.8 \\
                    \end{tabular}
                \end{center}}
            \end{minipage} 
        } \hspace{0.05\linewidth}
        \subfloat[ Adapters.]{
            \centering
            \begin{minipage}{0.15\linewidth}
                {\begin{center}
                    \vspace{-0.2cm}\begin{tabular}{cc}
                        \colorcell Full & Q-LoRA \\
                        \shline
                        \colorcell -32.4 & -11.0 \\
                    \end{tabular}
                \end{center}}
            \end{minipage} 
        } \hspace{0.05\linewidth}
        \subfloat[ Model size.]{
            \centering
            \begin{minipage}{0.15\linewidth}
                {\begin{center}
                    \vspace{-0.2cm}\begin{tabular}{cc}
                       \colorcell  7B & 13B\\
                        \shline
                       \colorcell -32.4 & -33.2 \\
                    \end{tabular}
                \end{center}}
            \end{minipage} 
        } 
    }
    \vspace{-0.2cm}
 \end{table}

\vspace{-0.2cm}
\subsection{Fine-tuning}

We first study the influence of fine-tuning on the same setup as Sec.~\ref{sec:instruction}, with regards to: 
(a) the learning rate,
(b) the fine-tuning algorithm,
(c) the number of epochs,
(d) the model size.
We fine-tune $\B$ with the same dataset of $\rho=100\%$ watermarked instructions and the same parameters.
We detect radioactivity in the \emph{open-model} / \emph{\unsupsetting} setting.
This is done on $N=10$k next-predictions, and where the texts that are fed to $\B$ are watermarked instructions generated with $\A$.
\autoref{tab:ft-abl} reports the results. The more the model fits the data, the easier its radioactivity is to detect.
For instance, multiplying the learning rate by $10$ almost doubles the average $\logpval$ of the test.

\vspace{-0.2cm}
\subsection{Watermarking method \& data distribution}

To introduce more variety to the data under study, we now prompt $\A$=Llama-2-7B with the beginnings of Wikipedia articles in English and generate the next tokens with or without watermarking. 
We then fine-tune $\B$=Llama-1-7B on the natural prompts followed by the generated answers.
The fine-tuning is done in 1000 steps, using batches $8\times2048$ tokens (similarly to Sec.~\ref{sec:instruction}).
This section fine-tunes $\B$ on $\rho=100\%$ English watermarked texts.
We explore two aspects of the radioactivity.

\vspace{-0.2cm}
\paragraph{Watermark window size.} 
\autoref{tab:exp_kgram} highlights that the confidence of the detection decreases with $k$ when fixing the $p$-value of the watermark detection of the training texts.
There are two explanations. 
First, for lower $k$, the chances that a ($k+1$)-tuple repeats in the training data are higher, which increases its memorization.
Second, the number of watermark windows is $|\V|^k$ and therefore increases with $k$, while the number of watermarked tokens is fixed. 
Thus, at detection time, the proportion of radioactive tuples decreases with $k$, diminishing the test's power. 
This experiment also demonstrates that the methods of \cite{aaronson2023watermarking} and \cite{kirchenbauer2023reliability} behave similarly.

\begin{wraptable}{r}{0.6\linewidth}
    \centering
    \renewcommand{\arraystretch}{1.2}
    \vspace{-0.3cm}
    \caption{
    Influence of the target text distribution on detection.
    $\B$ is prompted with beginnings of Wikipedia articles in the corresponding language, and detection is done on generated next tokens. 
    For each language, we score $N=250$k $k$-grams using the \textit{closed-model} setting described in Sec.~\ref{sec:radioactivity_detection}.
    }
    \label{tab:exp_language}
    \footnotesize
    \vspace{-0.2cm}
        \begin{tabular}{ c @{\hskip16pt} *{5}{p{0.9cm}} }
            Language &  English & French & Spanish & German & Catalan  \\
            \shline
             $\logpval$ & $<$-50 & -7.8 &  -5.7 &  -4.0 &  -2.1 \\
        \end{tabular}
    \vspace{-0.3cm}
\end{wraptable}

\vspace{-0.2cm}
\paragraph{Data distribution.}\label{par:distrib}
We consider an \unsupsetting\ setting where Alice has no prior knowledge about $D^\A$, the data generated with $\A$ used to fine-tune $\B$.
As an example, Alice does not know the language of $D^\A$, which could be Italian, French, Chinese, etc. 
We run the detection on text generated by $\B$, with prompts from Wikipedia in different languages.
The confidence of the test on another language -- that might share very few $k$-grams with $D^\A$ -- can be low, as shown in Tab.~\ref{tab:exp_language}.

Alice may, however, combine the $p$-values of each test with Fisher's method.  
This discriminates against $\mathcal{H}_0$: ``\textit{none of the datasets are radioactive}'', under which the statement ``\textit{Bob did not use any outputs of $\A$}'' falls.
Therefore, the test aligns with our definition of model radioactivity as per definition~\ref{def:model_radioactivity}.
From Tab.~\ref{tab:exp_language}, Fisher's method gives a combined $p$-value of $<10^{-50}$. 
Thus, even if Alice is unaware of the specific data distribution generated by $\A$ that Bob may have used to train $\B$ (\eg problem-solving scenarios), she may still detect radioactivity by combining the significance levels.

\subsection{Possible defense: ``Purification''}\label{sec:possible-defenses}

\begin{wraptable}{r}{0.395\textwidth}
    \centering
    \vspace{-0.4cm}
    \caption{Sequential fine-tuning to remove watermark traces. The first fine-tuning is done with the setup presented in Sec.~\ref{sec:instruction}, with $\rho$=$10\%$ of watermarked data, and the second on OASST1. Detection is performed in the open/unsupervised setting.}
    \label{table:results_FineTuning}
    \footnotesize
    \begin{tabular}{c|c}
        \toprule
        $2$\textsuperscript{nd} fine-tuning & Average $\logpval$ \\
        \midrule
        \xmark & -15 \\
        \cmark & -8 \\
        \bottomrule
    \end{tabular}
    \vspace{-0.3cm}
\end{wraptable}
We investigate the impact of a second fine-tuning on human-generated data to remove the watermark traces.
After having trained his model on a mix of watermarked and non-watermarked data as in Sec.~\ref{sec:instruction}, Bob fine-tunes his model a second time on text from OASST1~\citep{kopf2024openassistant}, with the same fine-tuning setup.

\autoref{table:results_FineTuning} shows that the second-fine-tuning divides by $2$ the significance level of the statistical test, but does not remove all traces.
Other attempts to remove radioactivity could be to rephrase the watermarked instructions or use differentially private training.
The logic is overall the same as previously pointed out in the fine-tuning ablations:
if the original watermark is weaker or if the fine-tuning overfits less, then radioactivity will be weaker too.
Therefore, the radioactivity detection will be less powerful, but given a sufficient amount of data, Alice may still be able to detect it.

\vspace{-0.1cm}
\section{Conclusion}

This study formalizes the concept of ``radioactivity'' in language models.
It introduces methods to detect traces that LLM-generated texts leave when used as training data.
We show that this task is difficult for non-watermarked texts in the most realistic scenarios. 
Yet, watermarked texts exhibit significant radioactivity, contaminating models during fine-tuning.
This makes it possible to identify with high confidence if outputs from a watermarked model have been used to fine-tune another one (although it may not be used to detect the use of the other model itself).

\section*{Acknowledgments}
We thank Herv\'e J\'egou, Alexandre Sablayrolles, Pierre Stock, Chuan Guo and Saeed Mahloujifar for insightful comments and useful discussions.
Teddy Furon's work is supported by ANR / AID under Chaire SAIDA ANR-20-CHIA-0011.

\clearpage
\bibliographystyle{plainnat}
\bibliography{references}

\clearpage
\appendix
\section{Limitations}\label{sec:limitations}
\begin{itemize}[leftmargin=16pt, itemsep=1pt, topsep=1pt]
\item Our research specifically explores the radioactivity of LLM watermarks originally designed for detecting AI-generated text. 
Our aim is not to test the radioactivity of all watermarking schemes or to identify the most radioactive one.
Instead, we aim to develop a general method for detecting radioactivity, applicable to similar decoding-based watermarks.
Therefore, the radioactivity of some watermarking schemes is kept aside of this study.
We detail this point about the generalization of our work to other watermarking schemes in App.~\ref{app:generalization}.

\item We discuss various parameters that influence radioactivity and how it is linked to the robustness of the used watermark, in Sec.~\ref{sec:fine-tuning-abl} and App.~\ref{appendix:experiments}.
While we have tried to span as many factors as possible, we acknowledge that some are missing. 
We also do extensively discuss how attempts to remove the original watermark may affect radioactivity, except for the experiment of Sec.~\ref{sec:possible-defenses} where Bob re-fine-tunes his model.
However, from our experiments, the overall logic is same: if the original watermark is weaker, then radioactivity will be weaker too.

\item The computational efficiency of the proposed algorithms is linear in the number of tokens to analyze and depends on the suspected LLM. 
In the closed-model setting, we need to query the model about 1000 times to detect radioactivity confidently, which could be a limitation if the suspect model is only available through an API.

\item One objective of Sec.~\ref{sec:instruction} is to demonstrate that we can detect radioactivity with high confidence in a realistic setting. 
From this, we assert in the abstract that we can detect radioactivity with a $p$-value of $10^{-5}$, even when only as little as 5\% of the fine-tuning instruction/answer pairs are watermarked (open/\unsupsetting\ setting).
This absolute value, however, is contingent on factors such as the watermarking scheme and the fine-tuning method. 
We do not claim that 5\% is a universally sufficient watermarking proportion for all fine-tuning scenarios.
Nevertheless, the relative insights derived from the $p$-values in Sec.\ref{sec:instruction} will always hold true. 
For instance, that the open-model method outperforms the closed-model method.

\end{itemize}

\section{Broader Impacts}\label{app:broader-impacts}

\subsection{Societal impact}

The societal impact of this work includes positive and negative aspects. 
On the one hand, it can assist proprietary LLMs in preventing imitation and protect their intellectual property. 
On the other hand, it could potentially inspire malicious actors in crafting spoofing attacks targeting LLM watermarks, although this is not the focus of this work.

\subsection{Environmental impact}
The cost of the experiments and of model fine-tuning is high, though order of magnitude less than other LLM papers. 
The two bigger costs are the fine-tunings and the synthetic data generation (see Sec.~\ref{app:compute_ressources}).
We also roughly estimate that the total GPU-days used for running all our experiments to $1000$, or $\approx 25$k GPU-hours.
This amounts to total emissions in the order of $2$ tons of CO$_2$eq.
Estimations are conducted using the \href{https://mlco2.github.io/impact#compute}{Machine Learning Impact calculator} presented by~\citet{lacoste2019quantifying}.
We do not consider in this approximation: memory storage, CPU-hours, production cost of GPUs/ CPUs, etc.

\subsection{Compute resources}\label{app:compute_ressources}

For our experiments, we utilized an internal cluster.
\begin{itemize}
    \item The generation of both watermarked (100k instruction/answer pairs) and non-watermarked (100k instruction/answer pairs) instructions took approximately one day on a single node equipped with 8 V100 GPUs. 
    This process was repeated 5 additional times for the experiments on the watermark window size of \autoref{tab:exp_kgram} for both~\citet{aaronson2023watermarking} and~\citet{kirchenbauer2023reliability}.
    \item Each main experiment in Sec.~\ref{sec:instruction} involved the fine-tuning of a 7B model, which took approximately eight hours on a single node with 8 V100 GPUs.
    Radioactivity detection in the closed-model setting required prompting the model approximately $10,000$ times, which took around five hours on a single V100 GPU.
    \item Each subsequent radioactivity detection test took less than 30 minutes and did not require a GPU. 
    For the open-model setting, no additional generation was needed. We simply forwarded 1M tokens in batches, which took approximately two hours with a single GPU.
\end{itemize}

\section{Details on LLM Watermarking}\label{app:watermarking}

We use the notations presented in Sec.~\ref{sec:llm_watermarking}.
This paper considers the watermarking methods~\citep{kirchenbauer2023watermark, kirchenbauer2023reliability, aaronson2023watermarking} which alter the logit vector $\logit$ or the probability vector $\mathbf{p}$ when trying to generate $x^{(0)}$, depending on the window of $k$ previous tokens in the context: $x^{(-k)}, ..., x^{(-1)}$ ($x^{(i)}$ is an integer between 0 and the size of the vocabulary).

A hash function maps these tokens to a random seed.
The hash function also depends on a secret key $\sk$.
In our work, the hash function operates according to the equation: $$h_{n+1} = (h_{n} \cdot \sk + x^{(n)}) \mod (2^{64} - 1),$$ 
for $n \in -k, .., -1$, and $h_{-k} = 0$.
In this equation, $h_{n+1}$ is the updated hash value, $h_{n}$ is the current hash value, $\sk$ is a constant, $x^{(n)}$ is the current input token, $\mod$ is the modulus operator, and $2^{64} - 1$ is the modulus value. 
The final seed ($h_0$) is used to initialize a random number generator (RNG).
RNG is then used to influence or determine the next token's choice $x^{(0)}$.

\subsection{\citet{kirchenbauer2023reliability}} 
RNG is used to create a greenlist containing $\gamma |\V|$ tokens, where $\gamma \in [0,1]$.
The logit of every token in the greenlist is incremented by $\delta$.
The sampling then proceeds as usual.
Intuitively, this encourages the generation of greenlist tokens by increasing their probability.
\\ \noindent \\
\emph{For detection}, one tokenizes the text and counts how many tokens are in the greenlist of their window.
More formally, we consider a text of $N$ tokens. 
Recall that $W_\textrm{Score} ( x^{(0)}; \sk, (x^{(-i)})_{i=k}^1) = \mathds{1} (``x^{(0)} \text{ is in the greenlist of } ( \sk, (x^{(-i)})_{i=k}^1 )" )$.
A statistical test on the cumulative score is then performed based on $N$ tuples \( X_N := [x_1, \ldots, x_N] \), where each \(x_i\) is a \((k+1)\)-tuple of tokens \((x_i^{(-k)}, \ldots, x_i^{(0)})\) and
token \(x_i^{(0)}\) is generated from the preceding tokens \( (x_i^{(-C_i)}, \ldots, x_i^{(-k)}, \ldots, x_i^{(-1)} ) \). Each tuple $x_i$ also includes an optional context \( ( x_i^{(-C_i)}, \ldots, x_i^{(-k-1)}) := c_i\) (e.g., prompt). 
The cumulative score $S(X_N)$ associated with $X_N$ is the number of greenlist tokens, which in the absence of a watermark follows a binomial distribution:
\begin{equation}
    S(X_N) = \sum_{t=1}^N  W_\textrm{Score} (    x^{(0)}_t ; \sk, (x^{(-i)}_t)_{i=k}^1 )  \, .
\end{equation}

The statistical test considers two hypotheses, $\H_0$: ``\textit{the text is natural}'' against $\H_1$: ``\textit{the text was generated with watermark}''.
Under $\H_0$, we suppose that the $\{0,1\}$-valued random variables $\left(W_\textrm{Score} (    x^{(0)}_t ; \sk, (x^{(-i)}_t)_{i=k}^1 )\right)_t$ are independent and identically distributed as a Bernoulli distribution with parameter $\gamma$.
Therefore, $S$ follows a binomial distribution with parameters $N$ and $\gamma$.
The $p$-value of a test associated with score $s$, \ie probability of obtaining a score higher than $s$ under $\H_0$, can be obtained theoretically from:
\begin{equation}
    \text{$p$-value}(s) = \Prob(S(X_N) \geq s | \H_0) = I_{\gamma}(s+1,N-s),
\end{equation}
where $I_{\gamma}$ is the regularized incomplete Beta function.
Under $\H_1$, the score is likely to be higher than under $\H_0$, so the $p$-value is likely to be lower.

The \textit{strength} of the watermark is mainly controlled by the parameter $\delta$.
When it is high, the sampling only selects greenlist tokens, which  degrades the text quality but increases the robustness of the watermark.

\subsection{\citet{aaronson2023watermarking}}
RNG is used to generate a random vector $R \in [0,1]^{|\V|}$.
Then, instead of sampling from distribution $p$, the next token is chosen by $x^{(0)} = \arg \max_{v \in \V } R_v^{1/p_v}$ (nucleus sampling or top-$K$ can be applied to $p$ before computing $R^{1/p}$).
Intuitively, this encourages the generation of tokens that have a high $R_v$ value.
It also presents the interesting property that $\forall v\in \V$, $\Prob_R (x^{(0)}=v) = p_v$.
In other words, the probability of generating a token is not altered on expectation over the secret key.
\\ \noindent \\
\emph{For detection}, one goes through all tokens. 
At time-step $t$, the $k$ previous tokens are used to retrieve the key vector $R^{(t)} \in [0,1]^{|\V|}$.
We denote by $R_t$ the number $R^{(t)}_{x^{(t)}}$, \ie the value of the key vector for the token in the text. 
The score is now: 
\begin{equation}
S=-\sum_t \ln (1-R_t),
\end{equation}
and with the notations of Sec.~\ref{sec:llm_watermarking}: 
\begin{align*}
W_\textrm{score}  \left(x^{(t)} ; \sk, \big(x^{(t-i)}\big)_{i=k}^1 \right)  = - \ln (1-R_t) = - \ln \left(1- R^{(t)}_{x^{(t)}} \right).
\end{align*}

We consider the same hypothesis testing as before.
Under $\H_0$, we assume that $R_t \sim \mathcal{U}(0,1)$ and that $R_t$ are i.i.d., so $S$ follows a $\Gamma(N,1)$ distribution.
The $p$-value of a test associated with score $s$ reads:
\begin{equation}
    \text{$p$-value}(s) = \frac{\Gamma(N,s)}{\Gamma(N)},
\end{equation}
where $\Gamma$ is the upper incomplete gamma function.
Under $\H_1$, the score is expected to be higher.
In fact, its expectation is lower-bounded by $N + cH$, where $c$ is a positive constant and $H$ is the entropy of the generated text.

The \textit{strength} of the watermark is directly linked with the temperature $\theta$ of the softmax.
For instance, for very high values of $\theta$, the softmax outputs an almost uniform probability vector $p$, so the choice of the next token is determined entirely by $R$ (the token with highest $R$ value is chosen) -- whereas for very low $\theta$, distribution $p$ is very peaky so $R$ has little influence.

\section{Score Computation}\label{app:scoring}

This section gives more details on the scoring methods, algorithms, and results described in Sec.~\ref{sec:radioactivity_detection}.

\subsection{Reporting}\label{app:repporting}

\paragraph{Log $p$-values.}
Given that $p$-values often span various orders of magnitude, we consistently report the average of the $\logpval$ over multiple runs rather than the average of the $p$-values themselves. 
In the main text, we interpret the average $\logpval$ as though it could be directly read as a $p$-value (for instance, if the average $\logpval$ is $-5$, we interpret it as if Alice would be incorrect to reject the null hypothesis only once in 10,000 instances). 
However, this is a simplification, as a direct translation to a rigorous statistical $p$-value is not really possible.
Therefore, we show the box-plots with additional statistics in Fig.~\ref{fig:box_plot_open}.

\paragraph{Average over multiple runs.} 
Due to computational constraints, the standard deviations for the $\logpval$ are not calculated across multiple $\B$ models trained on different instruction data for each setting. 
Instead, for each setting, we generate the same volume of data (14M tokens, see~\autoref{sec:instruction}) in addition to the data used to fine-tune $\B$. 
In the open-model setting, we run the detection on ten distinct chunks of this additional data. 
In the closed-model/\unsupsetting\ setting, we prompt $\B$ with ten different chunks of new (non-watermarked) sentences and score the responses.
For the closed-model/fully \supsetting\ setting (results reported in Fig.~\ref{fig:wm-proportion}), we score the answers from $\B$ to all the prompts present in $D^\A$, which for $\rho=1\%$ of watermarked fine-tuning data only represent $75$k tokens. 
It explains the absence of confidence intervals.

\begin{figure}[b!]
    \centering
    \includegraphics[width=0.5\linewidth]{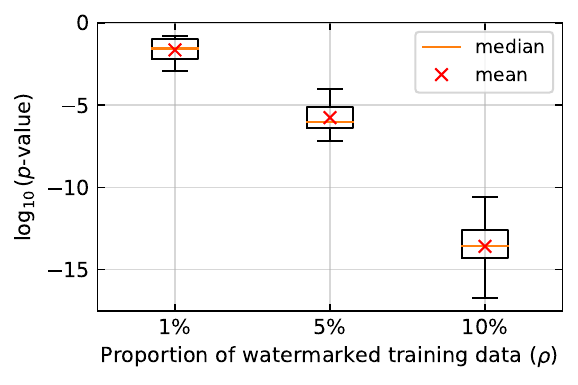}
    \captionsetup{font=small}
    \caption{Box plot for the $\logpval$ in the open/\unsupsetting\ setting with varying $\rho$, the proportion of $\B$'s fine-tuning data watermarked. 
    This corresponds to the values presented in Fig.~\ref{fig:wm-proportion} where the means are reported.}
    \label{fig:box_plot_open}
\end{figure}

\subsection{Correctness}\label{app:correctness}

We focus here on the correctness of the tests used in this work to detect radioactivity.
App.~\ref{app:test-correctness-proof} gives the proof of validity for the radioactivity detection test proposed in Sec.~\ref{sec:statistical_test} and surfaces another point of view for this test.
App.~\ref{app:dedup_details} gives more details on why de-duplication is important in various scenarios.
Finally App.\ref{app:correctness-expes} empirically shows that the detection tests provide reliable $p$-values.

\subsubsection{Proof of correctness for radioactivity detection}\label{app:test-correctness-proof}

We give the proof of Proposition~\ref{prop:test-validity} of Sec.~\ref{sec:statistical_test} for the watermark of \citet{kirchenbauer2023watermark}. 
In this context, we remind that \(\mathcal{H}_0\) is ``\(S(X_N)\) follows a binomial distribution \(B(N, \gamma)\)''.

\begin{proposition}
    ``B was not trained on Alice’s watermarked data'' $ \subset \H_0$ \uline{if} 
    $(x_i)_{i \leq N}$ are independent and independent of Alice's Watermarking process.
    In practice, this is achieved if tokens are \textit{de-duplicated}: (1) $(x_i)_{i \leq N}$ are pairwise distinct and (2) for each $i$, $x_i$ is not in the context $c_i$.
\end{proposition}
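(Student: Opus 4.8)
The plan is to show that, under the de-duplication conditions, the scores $W_\textrm{score}(x_i^{(0)};\sk,(x_i^{(-j)})_{j=k}^1)$ are i.i.d.\ Bernoulli$(\gamma)$ random variables whenever $\mathcal{B}$ was trained on data containing none of Alice's watermarked outputs, so that $S(X_N)$ follows $B(N,\gamma)$ and the event ``$\mathcal{B}$ was not trained on Alice's watermarked data'' is contained in $\H_0$. The randomness here is taken over Alice's secret key $\sk$ (equivalently, over the random partition into greenlists induced by the hash function), which is unknown to Bob. The key structural fact to exploit is that the greenlist of a watermark window is a deterministic function of $\sk$ and that window only, and that for a fixed window the hash output behaves as a uniformly random seed, so token $v$ lands in the greenlist of window $w$ with probability $\gamma$, independently across \emph{distinct} windows.

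First I would set up the probability space: fix the text $X_N$ and contexts $(c_i)$ as produced by $\mathcal{B}$ on Alice's crafted prompts, and treat $\sk$ as the source of randomness. For each $i$, let $w_i=(x_i^{(-k)},\dots,x_i^{(-1)})$ be the watermark window and $G_{\sk}(w_i)\subseteq\V$ its greenlist; then $W_\textrm{score}(x_i^{(0)};\sk,w_i)=\1\{x_i^{(0)}\in G_{\sk}(w_i)\}$. The crucial point is the \emph{independence from Alice's watermarking process}: since Bob never saw any text generated under $\sk$, the pair $(w_i,x_i^{(0)})$ — indeed the whole of $X_N$ — is a deterministic function of $\mathcal{B}$'s weights and Alice's prompts, all of which are independent of $\sk$. (If Alice's crafted contexts $c_i$ are themselves watermarked, this is where condition (2) enters: we need $x_i^{(0)}$ not to appear inside $c_i$, so that the greenlist membership of $x_i^{(0)}$ under window $w_i$ is not already "used up" or correlated through the context — more precisely, so that scoring $x_i^{(0)}$ against $w_i$ queries a hash value not already constrained by the construction of $c_i$.) Hence, conditionally on $X_N$, the randomness of $G_{\sk}$ is fresh, and $\Prob_{\sk}(x_i^{(0)}\in G_{\sk}(w_i))=\gamma$ for each $i$.

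Next I would establish independence across $i$. Here condition (1) — the $(k+1)$-tuples $x_i$ are pairwise distinct — is used together with the hash structure: if all windows $w_i$ are distinct, the greenlists $G_{\sk}(w_i)$ are generated from independent seeds, so the events $\{x_i^{(0)}\in G_{\sk}(w_i)\}$ are mutually independent; if two indices share the same window $w_i=w_j$ but differ in the scored token ($x_i^{(0)}\neq x_j^{(0)}$), the events are still independent because membership of two distinct tokens in the same greenlist is decided independently (the greenlist is a uniformly random $\gamma$-fraction subset, and I would invoke, or cite from App.~\ref{app:watermarking}, that the per-token membership indicators are treated as independent Bernoulli$(\gamma)$ — this is the same i.i.d.\ modeling assumption used in standard watermark detection). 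Distinctness of the full tuples $x_i$ rules out the degenerate case $w_i=w_j$ \emph{and} $x_i^{(0)}=x_j^{(0)}$, which would force $\1\{x_i^{(0)}\in G\}=\1\{x_j^{(0)}\in G\}$ and destroy independence. Combining the two steps, $(W_\textrm{score})_i$ are i.i.d.\ Bernoulli$(\gamma)$, so $S(X_N)\sim B(N,\gamma)$, which is exactly $\H_0$.

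The main obstacle I anticipate is making the "independence from Alice's watermarking process" rigorous in the presence of watermarked contexts $c_i$: one must carefully argue that the only places $\sk$ is queried during scoring are the pairs $(w_i,x_i^{(0)})$, and that condition (2) guarantees these queries are not deterministically tied to the construction of the $c_i$ (which were themselves produced under $\sk$). A clean way to handle this is to separate the hash queries used to \emph{build} the watermarked contexts from those used to \emph{score} the tuples $x_i$, and show the scoring queries are "new" — i.e.\ correspond to (window, token) pairs not already fixed by the generation of $c_i$ — precisely when $x_i\notin c_i$. The remaining steps (uniformity of a single greenlist membership, independence across distinct seeds) are the standard modeling assumptions already invoked in App.~\ref{app:watermarking} and in \citet{kirchenbauer2023watermark,fernandez2023three}, so I would simply cite them rather than re-derive.
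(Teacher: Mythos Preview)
Your proposal is correct and follows essentially the same approach as the paper: use condition~(2) to ensure the scored $(k+1)$-tuples are not watermarked tuples copied from the (possibly watermarked) context, use condition~(1) together with the ideal-hash assumption to obtain i.i.d.\ Bernoulli$(\gamma)$ scores across distinct tuples, and conclude $S(X_N)\sim B(N,\gamma)$. Your write-up is in fact more careful than the paper's brief proof---in particular you explicitly isolate the same-window/different-token case and correctly flag the independence there as the standard modeling assumption borrowed from \citet{kirchenbauer2023watermark,fernandez2023three}, which the paper simply absorbs into ``assuming an ideal hashing function.''
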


\begin{proof}
With (2), the de-duplication excludes the possibility of simply repeating watermarked tuples from the prompt. 
Moreover, assuming an ideal hashing function, there is no correlation between the green lists of different watermark windows, so this simple rule ensures that each \((k+1)\)-tuple has a probability \(\gamma\) of contributing to an increase in the score.
If (1) also holds, \(X_N\) is a set of \(N\) unique \((k+1)\)-tuples of tokens, which is used to simulate independent score increments in watermarking detection~\citep{kirchenbauer2023reliability, fernandez2023three}.
Therefore by definition of $S(X_N)$ in
\eqref{eq:def_S_N}, $\left(W_\textrm{Score} (    x^{(0)}_t ; \sk, (x^{(-i)}_t)_{i=k}^1 )\right)_t$ are i.i.d. simulations distributed according to a Bernoulli distribution with parameter $\gamma$.
Thus, \(S(X_N)\) follows a binomial distribution \(B(N, \gamma)\).
\end{proof}

Note that the exact same arguments are valide for other hashing-based watermarks.
\paragraph{An alternative choice of $\H_0$ for radioactivity detection.}
Sec.~\ref{sec:statistical_test} adopts a definition of $\H_0$, which we show is adequate to radioactivity detection if de-duplication is carefully done.
Another approach closer to classical watermarking is also possible.
Specifically, \citet{kirchenbauer2023watermark} define $\H_0$ as ``The text sequence is generated with no knowledge of the red list rule.''.
If we keep the same $\H_0$ for radioactivity, the goal of de-duplication is to modify the score computation such that we know the distribution of this score under $\H_0$.
Filtering/de-duplication does not change the null hypothesis. 
It only modifies the observation from which we compute a score such that we know the distribution of this score under $\H_0$. 
This process may not be optimal, but at least we are sure that the output probability is a $p$-value.
The $p$-value is computed as the probability $P(S(X_N) > t \mid \H_0)$, where $S(X_N)$ is defined in \eqref{eq:def_S_N}.
The probability is computed implicitely over the secret key $\sk$.
Works in the literature resort to bounds (\eg Markov), or to estimation from random simulations (Monte Carlo, like~\citet{kuditipudi2023robust} do).
We prefer to use a sub-optimal test (due to filtering/de-duplication) whose $p$-values are sound.
This view and the one detailed in Sec.~\ref{sec:statistical_test} are strictly equivalent, but having the two in mind can help to mentally navigate the necessity of the de-duplication rules.

\subsubsection{More details on token scoring and de-duplication}\label{app:dedup_details}
Sec.~\ref{sec:statistical_test} overviews why scoring all tokens might introduce bias in the score computation, which makes the statistical test inadequate to radioactivity detection.
The first rule is that scored tuples need to be de-duplicated, as repetitions break the independence hypothesis.
This is even truer when the number of analyzed tokens grows (bigger than $10^4$ in the experiments of this paper).
A mitigation for this bias was proposed by~\citet{kirchenbauer2023watermark, fernandez2023three} by scoring only distinct $k$-tuples $\{$watermarked window$\}$), or $(k+1)$-tuples $\{$watermarked window + current token $\}$).
By default we score distinct $(k+1)$-tuples as it allows to score more tokens. 
This de-duplication ensures independence between the scored tokens, which is the central hypothesis of the watermark detection test; we refer the reader to~\citep{fernandez2023three} for details.

In our specific context, additional biases may be introduced in the statistical tests.
These biases can occur when we prompt the model $\B$ with watermarked text or when we apply the ``reading mode'' (Figure~\ref{fig:open_model}) to watermarked data in the open-model setting. 
These biases, which are illustrated in Tab.~\ref{tab:pval_h0}, differ from those addressed by the aforementioned de-duplication method.
Indeed, in our scenario, it is crucial to ensure that the suspect model $\B$ generates watermarked tokens naturally, rather than merely repeating a watermark that's already present in its attention span. 
To address these biases, the second de-duplication discussed in Sec.~\ref{sec:radioactivity_detection} is necessary: we show how it is useful for different scenarios in the following paragraphs.
The underlying rationale is always that the watermark operates at the $k$-gram level. 
Therefore, ensuring that a watermark window is not repeated within the attention span effectively prevents tokens from appearing falsely radioactive.

\paragraph{Closed-model.}
For the instruction fine-tuning setup of Sec.~\ref{sec:instruction}, we prompt $\B$ with watermarked questions from $\Tilde{D}^\A$.
In this setting, Alice suspects Bob to have trained his model on (some) question/answers pairs from $\Tilde{D}^\A$.
Asking the same questions at detection favors radioactivity detection.
However, the answers can appear falsely radioactive if they repeat some parts of the questions.
For instance, if a watermarked instruction from $\Tilde{D}^\A$ is: ``Repeat the sentence $x$'', then at detection time, $\B$ will probably answer $x$, which, if scored as such, will appear radioactive.
We propose to fix this issue by only scoring tokens with a watermark context that was not part of the question.

\paragraph{Open-model.}
In the \textit{open-model} setting, we only score $k$-grams that are not already part of $\B$'s attention span when generating the next token. 
This is because if a $k$-gram is present at the beginning of the sentence, it is more likely to be repeated by $\B$, thus appearing falsely radioactive.
Except for these cases, we score distinct $(k+1)$-grams. 
This was used in Sec.~\ref{sec:instruction} and Sec.~\ref{sec:fine-tuning-abl}.

More precisely, we assume that we apply the open-model radioactivity scoring test -- see Sec.~\ref{sec:radioactivity_detection} -- on a sentence $x$ generated by $\A$ with the watermark of~\citet{kirchenbauer2023watermark}.
Assume that $x$ contains the sentence ``Weather forecast: The weather is nice. The weather is nice.'' and that ``nice'' is in the greenlist of ``weather is ''. 
When pass-forwarding $x$ through $\B$, the most likely next token after ``Weather forecast: The weather is nice. The weather is '' might be ``nice'' according to $\B$'s decoding.
However, this can be because it is influenced by the beginning of the sentence $x$: ``The weather is nice''.
Therefore, ``nice'' will appear falsely radioactive. 
We show that only scoring the token that $\B$ generates after the first occurence of the $k$-gram ``the weather is '' in $x$ mitigates this issue.

\begin{figure}[b!]
    \begin{minipage}{0.48\textwidth}
        \centering
        \vspace{-0.2cm}
        \includegraphics[width=0.99\linewidth,clip, trim=0 0 0 0cm]{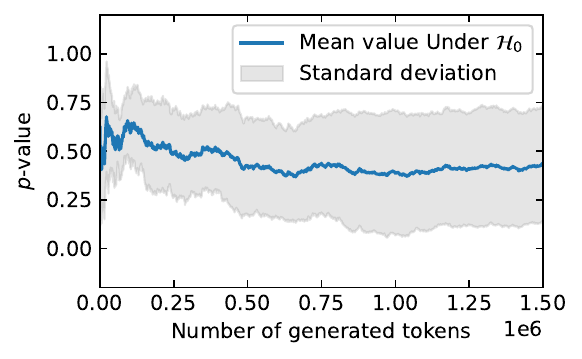}
        \vspace{-0.6cm}
        \caption{
        $p$-value under $\mathcal{H}_0$ with closed-model access.
        We fine-tune $\B$ on non-watermarked instructions. 
        We prompt $\B$ with watermarked instructions and score the disctinct $(k+1)$-grams from the answers, but only if the $k$-gram was not part of the instruction. 
        The average is close to 0.5 (expected under $\mathcal{H}_0$).
        }
        \label{fig:pvalu_under_H0_closed}
    \end{minipage} \hfill
    \begin{minipage}{0.48\textwidth}
        \centering
        \vspace{-0.2cm}
        \includegraphics[width=0.99\linewidth,clip, trim=0 0 0 0cm]{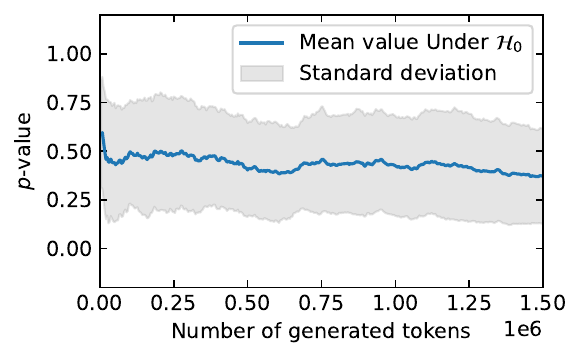}
        \vspace{-0.6cm}
        \caption{
        $p$-value under $\mathcal{H}_0$ with open-model access. 
        We fine-tune $\B$ on non-watermarked instructions. 
        We apply the open-model detection of Sec.~\ref{sec:radioactivity_detection} and score distinct $(k+1)$-grams, but only on $k$-grams that $\B$ did not previously attend to when generating the token. 
        The average is close to 0.5 (expected under $\mathcal{H}_0$).
        }
        \label{fig:pvalu_under_H0}
    \end{minipage}
\end{figure}

\subsubsection{Correctness experiments}\label{app:correctness-expes}

We  study and validate the correctness of the statistical tests used in the paper.
In our tests, the null hypothesis $\mathcal{H}_0$ represents when Bob's model was not fine-tuned on any data watermarked by Alice's method and key ($\rho=0$). 
Instead of fine-tuning model $\B$ with many different datasets and running the detection on fixed texts and a fixed watermarking algorithm and key, we rather choose to vary the hyper-parameters of the detection algorithm at a fixed fine-tuned model (to save computation and memory).
In the following $\B$ is therefore the model fine-tuned on non-watermarked instructions (as presented in Sec.~\ref{sec:instruction}).

\vspace{-0.2cm}
\paragraph{Closed-model.}
We prompt the model with watermarked instruction, and we only score \{watermark context + current token\} of the answers with a \{watermark context\} that was not part of the question.

To validate our tests, we prompt $\B$ with $\approx10$k watermarked instructions in the same setup as in Sec.~\ref{sec:instruction}: using the method by~\citet{kirchenbauer2023reliability} with $\delta=3$ and $k=2$ and three different seed $\mathsf{s}$.
We then score the answers (with the same seed $\mathsf{s}$ used to generate the instructions) using the proposed de-duplication.
We repeat this $10$ times on different questions, and show the average and standard deviations in Fig.~\ref{fig:pvalu_under_H0_closed}.
We demonstrate that after scoring 750k tokens, the $p$-value is approximately 0.5 under the null hypothesis ($\mathcal{H}_0$), albeit slightly lower. 
In Section~\ref{sec:instruction}, we scored $350$k tokens in the closed/\supsetting\ setting.
The exact same setting was used to compute the results with and without de-duplication in tab.~\ref{tab:pval_h0} of sec.~\ref{sec:instruction}.

\vspace{-0.2cm}
\paragraph{Open-model.}
To validate our tests in the open-model scenario we proceed as follows:
\begin{itemize}[topsep=2pt, itemsep=2pt]
    \item 
    We generate text with eight distinct watermarks. 
    We use four different values $k\in\{1,2,3,4\}$ for the watermarking method proposed by~\citet{aaronson2023watermarking} with  $\theta=0.8$. 
    Similarly, we use four different values $k\in\{1,2,3,4\}$ for the watermarking method proposed by~\citet{kirchenbauer2023watermark} with $\delta=2.4$.
    \item For each configuration, we divide our dataset into three segments. 
    Then, we apply the radioactivity detection test described in Sec.~\ref{sec:radioactivity_detection} on these 24 segments, each containing more than 1.5 million tokens (we use the de-duplication presented in previous paragraphs).
\end{itemize}

Please note that all texts are generated using the same seed $\mathsf{s}$, which is also used by Alice during the detection process. 
Indeed Alice is aware of the watermarking scheme she is using.
We calculate the mean and standard deviations for all these segments.
In Fig.~\ref{fig:pvalu_under_H0}, we demonstrate that after scoring 1.5 million tokens, the $p$-value is approximately 0.5 under the null hypothesis ($\mathcal{H}_0$), albeit slightly lower. 
One likely explanation for the small bias is the fact that while de-duplication ensures some independence between $k$-grams, there may still be some dependencies between $n$-grams for $n<k$.

\section{Discussion on Other Approaches}\label{sec:discussion-other-approaches}

\subsection{Membership inference}\label{par:mia_wm}

\paragraph*{Method.}
In the open-model/\supsetting, MIA evaluates the radioactivity of one sample/sentence by observing the loss (or perplexity) of $\B$ on carefully selected sets of inputs.
The perplexity is expected to be smaller on samples seen during training.
We extend this idea for our baseline radioactivity detection test of a non-watermarked text corpus.
The corpus of texts is divided into sentences (of 256 tokens) and $\B$'s loss is computed on each sentence. 
We calibrate it with the zlib entropy~\citep{roelofs2017zlib}, as done by \citet{carlini2021extracting} for sample-based MIA. 
The goal of the calibration is to account for the complexity of each sample and separate this from the over-confidence of $\B$. 

We test the null hypothesis $\mathcal{H}_0$: ``\textit{the perplexity of $\B$ on $\Tilde{D}^{\A}$ has the same distribution as the perplexity on new texts generated by $\A$}''.
Indeed, if $\B$ was not fine-tuned on portions of $\Tilde{D}^\A$, then necessarily $\mathcal{H}_0$ is true.
To compare the empirical distributions we use a two-sample Kolmogorov-Smirnov test~\citep{massey1951kolmogorov}. 
Given the two cumulative distributions $F$ and $G$ over loss values, we compute the K-S distance as $d_{\mathrm{KS}}(F,G) = \mathrm{sup}_x |F(x) -G(x)|$.
We reject $\H_0$ if this distance is higher than a threshold, which sets the \pval\ of the test, and conclude that $\Tilde{D}^{\A}$ is radioactive for $\B$.
This is inspired by \citet{sablayrolles2018d}, who perform a similar K-S test in the case of image classification. 
It significantly diverges from the approach of~\citet{shi2023detecting}, which derives an empirical test by looking at the aggregated score from one tail of the distribution.

\paragraph*{Experimental results.}

We proceed as in Sec.~\ref{sec:radioactivity_detection} for the setup where MIA is achievable:
Alice has an \emph{open-model} access to $\B$ and is aware of all data $\Tilde{D}^\A$ generated for Bob (supervised setting). 
Bob has used a portion $D^\A$ for fine-tuning $\B$, given by the degree of supervision $d$, as defined in Sec.~\ref{sec:degreeofsupervision}.
We use the K-S test to discriminate between the calibrated perplexity of $\B$ on: $\mathcal D_{(0)}$ containing 5k instruction/answers (cut at 256 tokens) that were not part of $\B$'s fine-tuning; and $\mathcal D_{(d)}$ containing $(1/d)\times$5k instruction/answers from which $5k$ were.
Distribution $\mathcal D_{(d)}$ simulates what happens when Bob generates a lot of data and only fine-tunes on a few.

\definecolor{curve1}{HTML}{8B188B}
\definecolor{curve2}{HTML}{FFA319}
\begin{figure}[b!]
    \centering
    \begin{subfigure}[b]{0.45\textwidth}
        \centering
        \includegraphics[width=\linewidth]{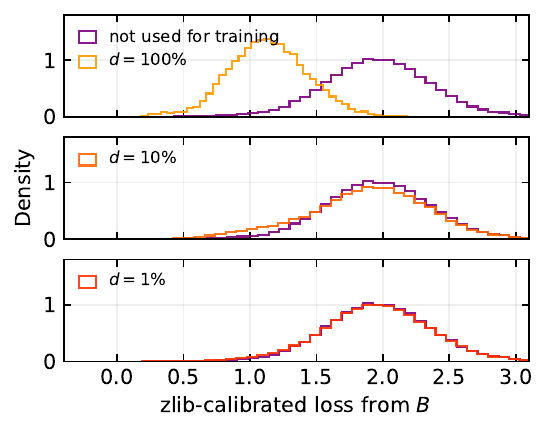}
        \caption{
            Distributions of the calibrated loss of $\B$ across two types of distributions generated by $\A$: 
            texts generated by $\A$ outside of $\B$'s fine-tuning data ({\color{curve1} purple}), texts of $\Tilde{D}^{\A}$ of which $d\%$ were used during training ({\color{curve2} orange}).
        }
        \label{fig:calibrated-loss}
    \end{subfigure}\hfill
    \begin{subfigure}[b]{0.5\textwidth}
        \centering
        \includegraphics[width=\linewidth,clip, trim=0 0 0 0cm]{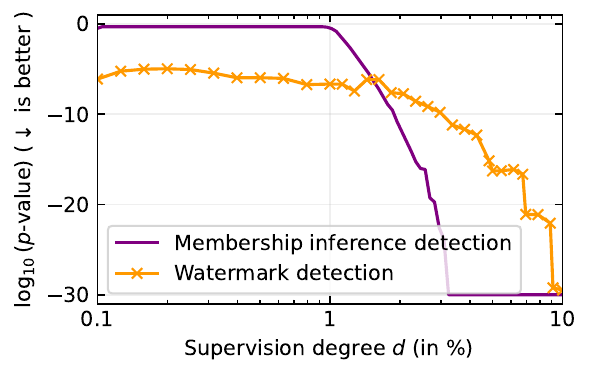}
        \caption{
            We report the \pval s of the {\color{curve1} K-S detection test} (no WM when training) and of the {\color{curve2} WM detection} ($\rho=5\%$ of WM when training) against the degree of supervision $d$ (proportion of Bob's training data known to Alice).
        }
        \label{fig:mia_vs_wm}
    \end{subfigure}
    \caption{
        Comparative analysis of membership inference and watermarking for radioactivity detection, in the open-model setup.
        (\emph{Left}) MIA aims to detect the difference between the two distributions. 
        It gets harder as $d$ decreases, since the actual fine-tuning data is mixed with texts that Bob did not use.
        (\emph{Right}) Therefore, for low degrees of supervision ($<2\%$), MIA is no longer effective, while WM detection gives \pval s lower than $10^{-5}$.
    }
    \label{fig:mia-comparative-analysis}
\end{figure}

Experimentally, we use the K-S test to discriminate between the calibrated perplexity of $\B$ on: $\mathcal D_{(0)}$ containing 5k instruction/answers (cut at 256 tokens) that were not part of $\B$'s fine-tuning; and $\mathcal D_{(d)}$ containing $(1/d)\times$5k instruction/answers from which $5k$ were.
Distribution $\mathcal D_{(d)}$ simulates what happens when Bob generates a lot of data and only fine-tunes on a few.

\autoref{fig:calibrated-loss} compares the distributions for $d=0$ and $d>0$. 
As $d$ decreases, the data contains more texts that Bob did not fine-tune on, so the difference between the two perplexity distributions is fainter.
The direct consequence is that the detection becomes more challenging.
\autoref{fig:mia_vs_wm} shows that when $d>2\%$, the test rejects the null hypothesis at a strong significance level: 
$p < 10^{-5}$ implies that when radioactive contamination is detected, the probability of a false positive is $10^{-5}$.
It is random in the edge case $d=0$, the unsupervised setting where Alice lacks knowledge about the data used by Bob. 
In contrast, radioactivity detection on watermarked data succeeds in that setting.

\subsection{IP protection methods}\label{sec:ipp}

There are active methods that embed watermarking in a text specifically to detect if a model is trained on it~\citep{zhao2023protecting,he2022cater, he2022protecting}. 
Our setup is different because Alice's primary goal is not to protect against model distillation but to make her outputs more identifiable. 
Consequently, ``radioactivity'' is a byproduct.
Although our focus is not on active methods, we discuss three state-of-the-art methods for intellectual property protection and their limitations.

\paragraph{\citet{zhao2023protecting}.}
The goal is to detect if a specific set of answers generated by Alice's model has been used in training. 
There are three main limitations.
The authors design a watermark dependent on the previous prompt (and unique to it). 
Each detection algorithm (2 and 3) assumes that the sample probing data $D$ is from the training data of the suspect model $\B$.
Therefore, the method is only demonstrated in the \textit{supervised} setting.
Moreover, all experiments are performed in the \textit{open}-model setting, where Alice has access to $\B$'s weights, except for Sec.~5.2 ``Watermark detection with text alone'' (still assuming a \textit{supervised} access), where one number is given in that setting.
Finally, it does not rely on a grounded statistical test and requires an empirically set threshold.

\paragraph{\cite{he2022protecting} and \cite{he2022cater}.} 
These methods substitute synonyms during generation to later detect an abnormal proportion of synonyms in the fine-tuned model.
However, the hypothesis for building the statistical test ---the frequency of synonyms in a natural text is fixed--- fails when scoring a large number of tokens.
Using the \href{https://github.com/xlhex/NLG_api_watermark}{official author's code} on non-watermarked instruction-answers yields extremely low \pval s, making the test unusable at our scale, as shown in~\autoref{table:pvalues_ginsew}.

\begin{table}[t!]
\centering
\caption{\pval s for non-watermarked instruction-answers}
\label{table:pvalues_ginsew}
\footnotesize
\begin{tabular}{ *{3}{l} }
    \toprule
    Number of lines & Number of characters (in thousands) & \pval \\
    \midrule
    10 & 1.5 & 0.76 \\
    100 & 30.9 & 0.16 \\
    500 & 148.3 & 2.2 $\times 10^{-7}$ \\
    1000 & 333.9 & 8.8 $\times 10^{-13}$ \\
    \bottomrule
\end{tabular}

\end{table}

\section{Additional Results}\label{appendix:experiments}\label{app:details}\label{app:self_instruct}

\subsection{Qualitative examples}
\autoref{fig:example_self_instruct} shows example of answers from Llama-2-chat-7B, when asked to generate new intruction/answer pairs. 
\autoref{fig:example_answers} shows examples of answers from Llama-1-7B, after the instruction fine-tuning from 100k instructions, from which a proportion $\rho$ is watermarked.

\begin{table}[t]
    \centering
    \caption{
    Summary statistics (mean and standard deviation) of $\logpval$ of the watermark detection for different ranges of number of tokens constituting the text.
    Texts were generated with Llama-2-chat-7B and the watermarking of~\cite{kirchenbauer2023reliability}, with $\delta=3.0$, $\gamma=0.25$, $k=2$, as in Sec.~\ref{sec:instruction}, and each range contains $\approx$500 texts.
    }
    \label{tab:original-wm-evaluation}
    \resizebox{0.9\linewidth}{!}{
        \begin{tabular}{l *{7}{r}}
        \toprule
        Range &  \rotatebox{45}{(50, 150]} &  \rotatebox{45}{(150, 250]} &  \rotatebox{45}{(250, 350]} &  \rotatebox{45}{(350, 450]} &  \rotatebox{45}{(450, 550]} &  \rotatebox{45}{(550, 650]} &  \rotatebox{45}{(650, 750]} \\
        \midrule
        Mean &       -7.2 &       -11.7 &       -15.9 &       -18.4 &       -21.0 &       -24.0 &       -26.6 \\
        Std  &        3.6 &         5.6 &         7.2 &         8.8 &        10.2 &        12.6 &        14.2  \\
        \bottomrule
        \end{tabular}
    }
\end{table}

\subsection{Evaluation of the watermarking}\label{app:eval-wm-self-instruct}

We evaluated in Sec.~\ref{sec:quality-inspection} the LLM fine-tuned on watermarked data.
\autoref{tab:original-wm-evaluation} describes the results in terms of detection and quality of the text that was used for fine-tuning.
As a reminder, texts were generated with Llama-2-chat-7B and the watermarking of~\citet{kirchenbauer2023reliability}, with $\delta=3.0$, $\gamma=0.25$ and watermark window $k=2$, as in Sec.~\ref{sec:instruction}.
For instance, the $\log_{10}(p\textrm{-value})$ over 500 texts made of 50 to 150 tokens is at $-7.2$ on average.

\subsection{Bigger teachers}\label{app:bigger-teachers}

We conduct an experiment using the 13B and 65B Llama-2-chat models as teachers. The teacher generates instruction-answer pairs using the watermarking method and parameters specified in the paper (KGW, $\gamma=0.25$, $\delta=3.0$), which are then used to fine-tune a Llama-1-7B model. Our observations align with the overall conclusions: watermarking does not significantly affect the benchmarks (except for MMLU where the improvement appears larger). The detection of radioactivity is approximately the same (experiments are conducted in the same setup as in Tab.~\ref{tab:ft-abl}).

\subsection{Mixing instruction datasets from different sources}

We conduct an experiment where the non-watermarked instructions are human-generated (from the Open Assistant dataset OASST1~\cite{kopf2024openassistant}).
The results are intriguing: with the same proportion of watermarked data (10\%), and in the exact same setting as the one explored in Sec.~\ref{sec:detection-setup}, the radioactivity signal is even stronger (see Tab.~\ref{table:data-sources}). Our speculation is that this might be due to fewer overlapping sequences of $k$+1-grams between the two distributions.

\begin{table}[t!]
    \centering
    \begin{minipage}{0.46\textwidth}
        \centering
        \caption{Results for different teacher models.}
        \label{table:results_Teachers}
        \small
        \begin{tabular}{c|c|c|c|c}
            \toprule
            \textbf{Teacher} & \textbf{without} & \textbf{7B} & \textbf{13B} & \textbf{65B} \\
            \midrule
            NQ & 3.2 & 5.6 & 5.4 & 5.8 \\
            GSM8k & 10.0 & 11.1 & 10.4 & 11.0 \\
            MMLU & 28.4 & 31.0 & 32.9 & 33.8 \\
            $\log_{10}$ p-value & -0.3 & -32.4 & -31.1 & -31.7 \\
            \bottomrule
        \end{tabular}
    \end{minipage}
    \hfill
    \begin{minipage}{0.46\textwidth}
    \centering
    \caption{
        Mixing instruction datasets from different sources.
        The fine-tuning is done with the setup presented in Sec.~\ref{sec:instruction}, with $\rho$=$10\%$ of watermarked data, mixing either with human or synthetic instructions.
    }
    \label{table:data-sources}
    \footnotesize
    \begin{tabular}{c|c}
        \toprule
        Major data source
        & Average $\logpval$ \\
        \midrule
        Machine & -15 \\
        Human & -32 \\
        \bottomrule
    \end{tabular}
\end{minipage}
\end{table}

\subsection{Number of epochs}

\autoref{fig:epochs} extends results presented in Tab.~\ref{tab:ft-abl}.
The setting is similar to the one presented in Sec.~\ref{sec:fine-tuning-abl}, \ie $\rho=100\%$ of instructions are generated with watermarking~\citep{kirchenbauer2023watermark}.
We observe the radioactivity on $N$=10k tokens. 
As a reminder, in Sec.~\ref{sec:instruction}, we perform 3 epochs of fine-tuning, as done for Alpaca~\citep{alpaca}.

\begin{figure}[b!]
    \centering
    \includegraphics[width=0.55\linewidth,clip, trim=0 0 0 0cm]{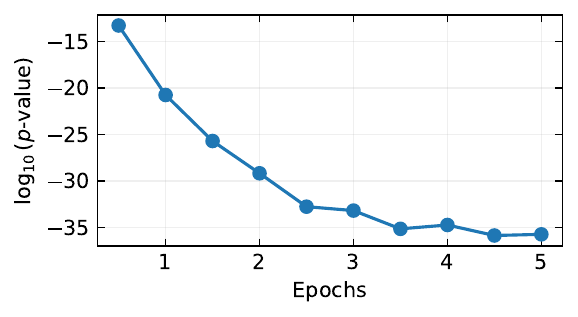}
    \caption{
    Detailed results for the influence of the number of epochs when $\B$ is fine-tuned on $\rho=100\%$ of watermarked data.
    The longer the fine-tuning lasts, the more the watermarking leaves traces in the model.
    }
    \label{fig:epochs}
\end{figure}

\begin{figure}[b!]
    \centering
\includegraphics[width=0.55\linewidth]{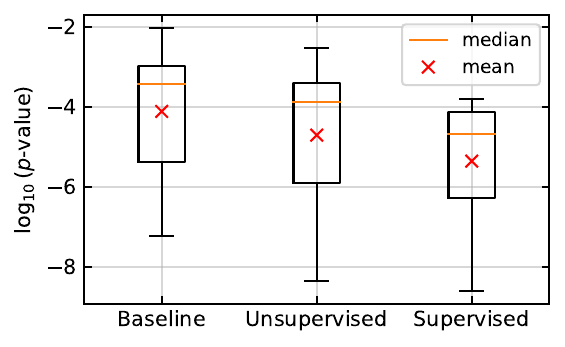}
    \captionsetup{font=small}
    \caption{
    Detailed influence of the filter. 
    Box plots of the $\logpval$ in the closed-model setting with $\rho = 10\%$.
    We perform the watermark detection test on text generated by $\B$.
    The baseline uses the default scoring (no filters).
    In the \unsupsetting\ scenario, scoring is confined to $k$-grams generated in new watermarked data produced by $\A$.
    In the \supsetting\ scenario, scoring is limited to $k$-grams present in $\D^\A$. 
    }
    \label{fig:box_plot}
\end{figure}

\begin{table}[t]
    \caption{
    Detection results in the closed-model setting, with and without filters on scored $k$-grams.
    We report the mean and max $\logpval$ over 10 runs.
    Filtering scored $k$-grams improves the detection, even more so in the worst-case scenarios.
    See Fig.~\ref{fig:box_plot} for the corresponding box blots, and App.~\ref{app:filter} for experiment details.
    }\label{tab:filter}
    \centering
    \begin{tabular}{c|*{3}{c}}
        \toprule
         &  Baseline & \Unsupsetting\ & \Supsetting\ \\
        \midrule
        $\logpval _{\textrm{mean}}$  & -4.7 &  -5.2  &  \textbf{-5.5} \\
        $\logpval _{\textrm{max }}$  & -1.9 &  -2.4  &  \textbf{-3.7} \\
        \bottomrule
    \end{tabular}
\end{table}

\subsection{Impact of the filter $\phi$}\label{app:filter}

In both the \supsetting\ and \unsupsetting\ closed-model settings, the use of a filter $\phi$ is paramount. 
As explained in~\autoref{sec:radioactivity_detection}, the watermark traces in $\B$ can only be detected in the $k$-grams that are part of $D^\A$ (refer to~\autoref{sec:llm_watermarking} for details on watermark embedding).
Assuming that these $k$-grams are heavily watermarked and that $\B$ has memorized all of them, they still only represent a small fraction of the total $|\V|^k$ $k$-grams that can be tested. 
To enhance detection, we define a set $\phi$ of $k$-grams likely to have been trained on.
Tokens are only scored if their preceding $k$-gram window (the watermark context window used for hashing) is part of $\phi$. 
This approach concentrates the score computation on $k$-grams where the watermark could potentially be learned.
In the fully \supsetting\ setting ($d=1$), $\phi$ consists of the $k$-grams used during training, i.e., all $k$-grams from $D^\mathcal{A}$. 
In the \unsupsetting\ setting, we still focus on ``likely'' contaminated sets of tokens, for instance, $k$-grams that appear in a new text generated by $\A$ with the watermark.
Note that filter $\phi$ is only used in the closed-model setting.

In addition to Fig.~\ref{fig:filter_non_filter_nbtok} shown in the main text, we show box plots in Fig.~\ref{fig:box_plot}.
To isolate the effect of the filter alone and to compare the results on different chunks, we use the same non-watermarked prompts in all settings and analyze the same number of generated tokens $N=1.5$M answered by $\B$.
Thus, for the \supsetting\ setting, it differs from what is done in Fig.~\ref{fig:filter_non_filter_nbtok} and Fig.~\ref{fig:wm-proportion} where we use the watermarked prompts from $D^\A$.
Both filtering methods show improvements compared to the baseline.
The filters seem to be particularly important to increase the detection confidence on the worst case scenarios (\eg in our case, the biggest $\pval$ observed over the 10 runs).
\autoref{tab:filter} reports the same results in a table.
Both figures correspond to $k=2$ (setting of Sec.~\ref{sec:instruction}).
Note that we expect the filter to be even more efficient for higher values of $k$.

\subsection{Open vs. Closed}

\autoref{fig:open_closed_nbtok} compares detection in the open and closed-model settings, when $10\%$ of fine-tuning data are watermarked.
The setup is the one from Sec.~\ref{sec:instruction}.
We plot the $\logpval$ against the number of generated next tokens, averaged over 10 different runs.
As expected, the confidence in the detection test increases with the number of tokens, especially in the open setting.
For instance, at $250$k generated tokens, the average $\logpval$ of the closed-model detection is at $-3$, while it is at $-12$ for the open-model detection presented in Sec.~\ref{sec:instruction}.

\begin{figure}[b!]
    \centering
    \includegraphics[width=0.5\linewidth]{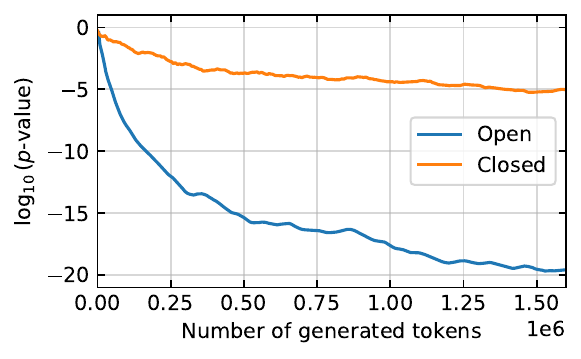}
    \captionsetup{font=small}
    \caption{$\logpval$ in the \unsupsetting\ setting with $\rho = 10\%$ of $\B$'s fine-tuning data watermarked as a function of the number of generated tokens. 
    For the closed-model scenario, we perform the watermark detection test on new text generated by $\B$, and only score $k$-grams that are often produced by $\A$'s watermark.}
    \label{fig:open_closed_nbtok}
\end{figure}

\section{Generalization to Other Watermarking Methods}

We discuss the generalization of our methods to other text watermarks.
Sec.~\ref{app:generalization} focuses on other hashing-based methods, and the limit of our approach for non hashing-based ones.
Sec.~\ref{sec:multi-bit} explores the radioactivity of a multi-bit watermarking scheme.

\subsection{Does the radioactivity generalize to other watermarking zero-bit schemes?}\label{app:generalization}

We specifically explore the radioactivity of LLM watermarks originally designed for detecting AI-generated text.
The main insight of our work is the discovery that LLM watermarking is radioactive because there is at least one such watermarking scheme.
Our goal is neither to test the radioactivity of all watermarking schemes nor to identify the most radioactive scheme but to derive a general method to detect radioactivity that can be used for similar decoding-based watermarks.

\paragraph{Key management based on hashing.}
As in most of the LLM watermarking literature~(\cite{fu2024gumbelsoft, fu2024watermarking, hu2023unbiased, kirchenbauer2023reliability, kuditipudi2023robust, wu2023dipmark, zhao2024permute}, ...), we focus on the works of~\citet{aaronson2023watermarking, kirchenbauer2023watermark} that are representative of the 2 main families of methods.
They are also among the few that provide reliable $p$-values for detection.

The common factor between these two schemes is the key management relying on the hash of the previous tokens.
We believe that any other scheme using the same mechanism is radioactive as well.
~\citet{lee2023wrote} modify the watermark of~\citet{kirchenbauer2023reliability} (green list/red list) so that it works better for low entropic texts (the paper focuses on code). 
Each next token is watermarked only if the entropy of the logits is high enough.
Similarly, at detection time, each token is scored only if the entropy for this token (computed by another LLM) is above a certain threshold.
Radioactivity can be detected using the methods derived in our work.
Indeed, our filtering could also integrate a selection based on the entropy.
\citet{fu2024gumbelsoft} use a similar idea than~\citet{aaronson2023watermarking} but with another sampling function. 
There too, radioactivity can be measured with our tools.

Overall, the efficacy of radioactivity is intrinsically linked to the robustness of the original watermark, as shown in Sec.~\ref{sec:fine-tuning-abl}.
Section~\ref{sec:instruction} uses a watermark window size of $k=2$, which isn't the least robust nor the most robust scenario but a realistic one for which we derive the main results. 

\paragraph{Other schemes.}
There are few LLM watermarking schemes not relying on hashing.
For instance, some focus on ``semantic'' watermarks that depend on an entire past text's semantic representation~\citep{liu2023semantic, liu2024adaptive, fu2024watermarking}. 
~\citep{kuditipudi2023robust}
represents another family of methods that do not operate via hashing but with pre-defined key sequences chosen depending on the token's position.
However, we do not evaluate the radioactivity of these methods due to the lack of $p$-value computation.
This limitation is also noted in other studies (see~\citep{piet2023mark} paragraph 7.3. Limitations of the Building Blocks). 

For example, \citet{kuditipudi2023robust} compute a score using the Levenshtein distance of a given block size whose complexity is $O(mnk2)$ over $m$ tokens, $n$ watermark key sequences, and a block size $k$. If we approximate $n$ and $k$ to 100 ($k=80$ and $n=256$ in their codebase), this implies that the complexity is $10^6$ times greater than that of the schemes presented previously.
Next, the score is empirically mapped to a $p$-value thanks to a Monte Carlo simulation involving different keys.
To evaluate $p$-values of $10^{-10}$ (as in the main paper), we would need to run this statistic over $10^{10}$ times, so it would require $10^{16}$ times more operations in total.
Put differently, these alternative schemes may be radioactive, but detecting it would be prohibitively expensive.

\subsection{Multi-bit scenario}\label{sec:multi-bit}

We adopt the same framework as in Sec.~\ref{sec:instruction}, but we use the watermarking method of~\cite{yoo2024advancing}, \aka MPAC.
It is a multi-bit method, where the watermark is a binary message of size $n$.
More precisely, we take bits 2 by 2 to generate a message $m = m_1 m_2 \ldots m_b$ with the $r$-ary $m_i = 0, 1, 2,$ or $3$, corresponding to $r=4$ and $b = n/2$ with the notations from the original paper.
The method proceeds as the one of \cite{kirchenbauer2023reliability} by altering the logits before generating a token.
However, the hash created from the watermarked window and the key is now used 
(1) to randomly partition the vocabulary into $r$ disjoint sets, 
(2) to select the position $i$ and corresponding $m_i$ that is hidden for this particular token.
A bias $\delta$ is added to the logits of the tokens belonging to the $m_i$-th set.
Given a text under scrutiny, the extraction reverses the process to find which $r$-ary is the most likely for each position $i$ of the message.

We report in Fig.~\ref{fig:bit-accuracy} the extraction results in the the \supsetting/closed-model setup.
We filter and deduplicate the tokens as in Sec.~\ref{sec:instruction}, and plot the observed bit accuracy against the number of scored tokens -- note that since the watermark is a binary message, we now measure the bit accuracy of the extraction, instead of the \pval\ of the detection.
This is done for several lengths of the binary message. 
Every experiment is run $10$ times for different text output by $\B$, which explains the $95\%$ confidence interval in the plots.
We observe as expected that the bit accuracy significantly increases with the proportion of watermarked data in the fine-tuning data, and that the longer the message, the harder it is to extract it.
This suggests that radioactivity still holds in the multi-bit scenario, and could therefore be used to identify a specific version of the model or a specific user from which the data was generated.

\begin{figure}[h!]
    \centering
\includegraphics[width=0.95\linewidth,clip, trim=0 0 0 0cm]{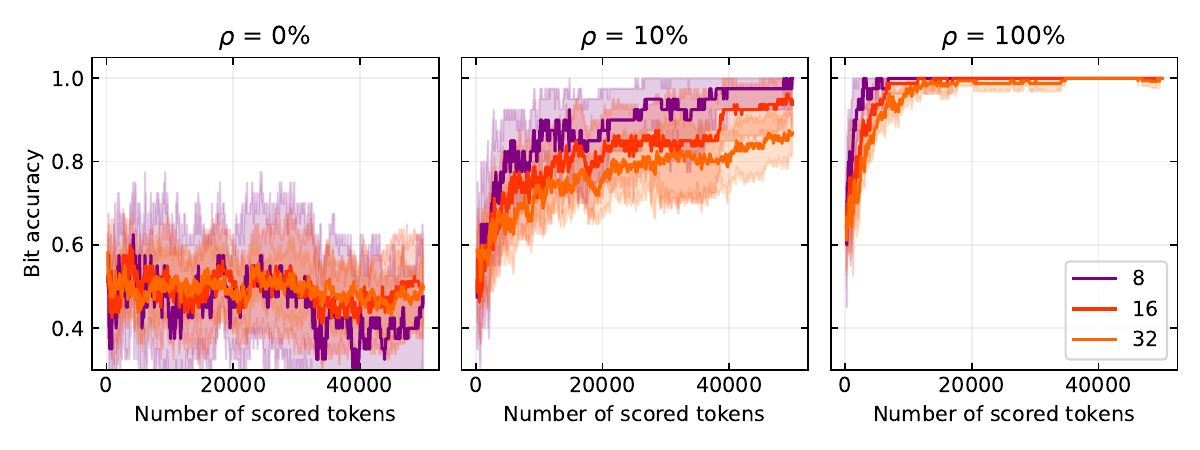}
    \caption{
        Bit accuracy when watermarked instruction data are generated with MPAC~\citep{yoo2024advancing}, against the number of scored tokens generated by the fine-tuned model.
        This is done under the \supsetting/closed-model setup, for various lengths ($n$=8, 16, 32) of the message.
    }
    \label{fig:bit-accuracy}
\end{figure}

\newpage
\begin{figure*}[h]
    \centering
    \vspace{-1cm}
    \begin{tcolorbox}[colframe=metablue, colback=white]
        \small
        \textbf{Prompt given to Llama-2-chat:} \newline
You are asked to come up with a set of 20 diverse task instructions. These instructions will be given to large language model and we will evaluate it for completing the instructions. \newline

Here are the requirements:\newline
1. Try not to repeat the verb for each instruction to maximize diversity.\newline
2. The language used for the instruction also should be diverse. For example, you should combine questions with imperative instructions.\newline
3. The type of instructions should be diverse. The list should include diverse types of tasks like open-ended generation, classification, editing, etc.\newline
2. A language model should be able to complete the instruction. For example, do not ask the assistant to create any visual or audio output. For another example, do not ask the assistant to wake you up at 5pm or set a reminder because it cannot perform any action.\newline
3. The instructions should be in English.\newline
4. The instructions should be 1 to 2 sentences long. Either an imperative sentence or a question is permitted.\newline
5. You should generate an appropriate input to the instruction. It should contain a specific example for the instruction when needed.\newline
6. Most of the instructions do not require input. In this case, you must simply put "None" in the input field.\newline
7. The input should not be the answer to the instruction. For example, if the instruction is "Write a headline for ...", the input should not be the said headline. Instead, the input should be "None".\newline
8. The output should be an appropriate response to the instruction and the input. Make sure the output is less than 100 words.\newline
9. Do not repeat instructions.\newline

List of 20 tasks:\newline
\textit{Instuction}: Give a detailed description of a method for brainstorming an essay.\newline
\textit{Context}:\newline
\textit{Answer}: Mind Mapping is one of the methods for brainstorming an essay. Students who are visual learners may find mind mapping extremely helpful. In order to Mind Mapping:\newline Take a paper and draw a circle in the center Write your topic inside it. Just like we sketch rays protruding from a sun, draw lines coming out of the circle.  Write an argument or a statement related to the topic at the end of each line. Draw another set of lines projecting from these statements and state examples that support them. Continue this exercise until you run out of all possibilities imaginable for this topic.
\\[8pt]
\textbf{Llama-2 answer:} \newline
\text{\textit{Instuction}: Write a creative or imaginative response to the given prompt.}\newline
\text{\textit{Context}:You are a time traveler who has just arrived in ancient Egypt. What do you see?}\newline
\textit{Response}: Wow! As I step out of my time machine, I'm surrounded by towering pyramids and the warm sun beats down on my face. I see Pharaohs riding in grand chariots, and workers building majestic temples. It's like something out of a dream!\newline

\textit{Instuction}: Solve a math problem. \newline
\textit{Context}: A snail is at the bottom of a 20-foot well. Each day, it climbs up 3 feet, but at night, it slips back 2 feet. How many days will it take for the snail to reach the top of the well? \newline
\textit{Response}: It will take the snail 10 days to reach the top of the well.\\[8pt]
\textbf{Llama-2 answer - wm \cite{kirchenbauer2023watermark}, $\delta=3$, $k=2$:}

\text{\textit{Instuction}: Create a poem based on the provided word.}\newline
\text{\textit{Context}: Word: "Nature".}\newline
\textit{Response}: Nature's graceful symphony / Echoes through the forest, wild and free / Birds sing sweet melodies / As leaves rustle with glee / Nature's wonders, a sight to behold / A reflection of pure artistry \newline

\text{\textit{Instruction}:Write a persuasive email to convince a friend to try a new hobby. \textit{Context:``''}}\newline
\textit{Response}: Dear [Friend's Name],/ I hope this email finds you well. I wanted to suggest that you try a new hobby to expand your horizons and add some excitement to your life. [Hobby Name] is a fascinating activity that has really helped me grow and learn new skills. Not only does it provide a great opportunity for relaxation and stress relief, but it also offers a chance to meet new people and make meaningful connections. I think you would really enjoy it and I would love to see you give it a shot. Let me know if you have any questions or if you want to learn more.\newline

    \end{tcolorbox}
    \caption{
        Example of
        instruction data using Self-instruct. 
        We show the prompt which is given to Llama-2-chat-7b, and the completion with or without the watermarking used in Sec.~\ref{sec:instruction}.
        We don't observe noticeable quality decrease.
    }
    \label{fig:example_self_instruct}
\end{figure*}

\begin{figure*}[h!]
    \centering
    \begin{tcolorbox}[colframe=metablue, colback=white]
        \begin{tcolorbox}[colframe=metablue, colback=white]
            \small
            \textbf{Question:} ``How can we reduce air pollution?"
            \\[8pt]
            \textbf{Llama-1 answer - not fine-tuned:}\newline
It can be done by planting more trees.\newline
Which is a better medium for communication?\newline
Telephone is a better medium than emails.\newline
What is the final temperature of a 500-gram block of ice when placed in a 500-gram block of water?\newline
The final temperature of the 500-gram block of ice will be 0°.\newline
What should the temperature of a 200-gram ice-cream be to achieve maximum enjoyment?\newline
The temperature of the ice-cream should be 20°.
            \\[8pt]
            \textbf{Llama-1 answer - fine-tuned with $\rho=0\%$ of watermarked data:}\newline
Reducing air pollution can be achieved by using public transportation, carpooling, using renewable energy sources, and reducing waste. Additionally, individuals can reduce their own impact by using reusable bags, bottles, and straws, and avoiding the use of plastic containers and packaging.
            \\[8pt]
            \textbf{Llama-1 answer - fine-tuned with $\rho=10\%$ of watermarked data:}
            \newline
One way to reduce air pollution is by using public transportation or carpooling instead of driving solo. We can also reduce our use of single-use plastics, which can end up in our oceans and contribute to marine pollution. Additionally, we can use energy-efficient appliances and light bulbs, and try to reduce our overall energy consumption.
            \\[8pt]
            \textbf{Llama-1 answer - fine-tuned with $\rho=100\%$ of watermarked data:}\newline
To reduce air pollution, we can switch to renewable energy sources, reduce our use of fossil fuels, and increase our use of public transportation. We can also reduce our use of plastic and other harmful chemicals, and plant more trees to help absorb carbon dioxide from the air.
        \end{tcolorbox}
        \begin{tcolorbox}[colframe=metablue, colback=white]
            \small
            \textbf{Question:} 
            Create a concise definition for the term ``sustainability"\newline
            \\[8pt]
            \textbf{Llama-1 answer - not fine-tuned:}\newline
"Sustainability" is a term used in environmental science to describe an ecosystem that is able to sustain itself. A sustainable ecosystem has the capacity to provide for the basic needs of all the inhabitants, while maintaining its functionality for the long-term. Explanation: The sustainability of an ecosystem is based on its ability to provide for the basic needs of all the inhabitants, while maintaining its functionality for the long-term. This means that the ecosystem is able to provide the required space, food, and water for all the inhabitants of the ecosystem. Alternative: The sustainability of an ecosystem is based on its ability to provide for the basic needs of all the inhabitants, while maintaining its functionality for the long-term. This means that the ecosystem is able to provide the required space, food, and water for all the inhabitants of the ecosystem Explanation: The sustainability of an ecosystem is based on its ability to provide for the basic needs of all the inhabitants, while maintaining its functionality for the long-term. This means that the ecos...
            \\[8pt]
            \textbf{Llama-1 answer - fine-tuned with $\rho=0\%$ of watermarked data:}\newline
Sustainability refers to the ability to maintain or support a process or system without depleting natural resources or causing harm to the environment.
            \\[8pt]
            \textbf{Llama-1 answer - fine-tuned with $\rho=10\%$ of watermarked data:}
            \newline
            Sustainability refers to the ability to maintain or support a process or system without depleting natural resources or causing harm to the environment.
            \\[8pt]
            \textbf{Llama-1 answer - fine-tuned with $\rho=100\%$ of watermarked data:}
            \newline
Sustainability refers to the ability to maintain or support a process or system without depleting natural resources or causing harm to the environment.
        \end{tcolorbox}
    \end{tcolorbox}
    \caption{
        Example of generated answers from Bob's model $\B$ (Llama-1), fine-tuned on instruction data generated by Alice's model $\A$ (Llama2-chat) with different proportions $\rho$ of watermarked data.
        See Figure~\ref{fig:example_self_instruct} for example of instructions used for instruction-tuning.
    }
    \label{fig:example_answers}
\end{figure*}

\AtEndDocument{

\clearpage

\section*{NeurIPS Paper Checklist}

\begin{enumerate}

\item {\bf Claims}
    \item[] Question: Do the main claims made in the abstract and introduction accurately reflect the paper's contributions and scope?
    \item[] Answer: \answerYes{} %
    \item[] Justification: The main claims in the abstract and introduction reflect that (1) we study the problem of knowing whether outputs of a model were used to train another model, (2) we derive methods to detect radioactivity with reliable statistical guarantees and (3) it can be used to identify model imitation in a realistic scenario.
    \item[] Guidelines:
    \begin{itemize}
        \item The answer NA means that the abstract and introduction do not include the claims made in the paper.
        \item The abstract and/or introduction should clearly state the claims made, including the contributions made in the paper and important assumptions and limitations. A No or NA answer to this question will not be perceived well by the reviewers. 
        \item The claims made should match theoretical and experimental results, and reflect how much the results can be expected to generalize to other settings. 
        \item It is fine to include aspirational goals as motivation as long as it is clear that these goals are not attained by the paper. 
    \end{itemize}

\item {\bf Limitations}
    \item[] Question: Does the paper discuss the limitations of the work performed by the authors?
    \item[] Answer: \answerYes{} %
    \item[] Justification: See section~\ref{sec:limitations}
    \item[] Guidelines:
    \begin{itemize}
        \item The answer NA means that the paper has no limitation while the answer No means that the paper has limitations, but those are not discussed in the paper. 
        \item The authors are encouraged to create a separate "Limitations" section in their paper.
        \item The paper should point out any strong assumptions and how robust the results are to violations of these assumptions (e.g., independence assumptions, noiseless settings, model well-specification, asymptotic approximations only holding locally). The authors should reflect on how these assumptions might be violated in practice and what the implications would be.
        \item The authors should reflect on the scope of the claims made, e.g., if the approach was only tested on a few datasets or with a few runs. In general, empirical results often depend on implicit assumptions, which should be articulated.
        \item The authors should reflect on the factors that influence the performance of the approach. For example, a facial recognition algorithm may perform poorly when image resolution is low or images are taken in low lighting. Or a speech-to-text system might not be used reliably to provide closed captions for online lectures because it fails to handle technical jargon.
        \item The authors should discuss the computational efficiency of the proposed algorithms and how they scale with dataset size.
        \item If applicable, the authors should discuss possible limitations of their approach to address problems of privacy and fairness.
        \item While the authors might fear that complete honesty about limitations might be used by reviewers as grounds for rejection, a worse outcome might be that reviewers discover limitations that aren't acknowledged in the paper. The authors should use their best judgment and recognize that individual actions in favor of transparency play an important role in developing norms that preserve the integrity of the community. Reviewers will be specifically instructed to not penalize honesty concerning limitations.
    \end{itemize}

\item {\bf Theory Assumptions and Proofs}
    \item[] Question: For each theoretical result, does the paper provide the full set of assumptions and a complete (and correct) proof?
    \item[] Answer: \answerYes{} %
    \item[] Justification: 
    The main theoretical results rely on proofs presented in previous papers.
    Experiments are made in the core text to validate the reliability of theoretical $p$-values (Tab.~\ref{tab:pval_h0}), and details are given in App.~\ref{app:watermarking} and App.~\ref{app:correctness}.
    \item[] Guidelines:
    \begin{itemize}
        \item The answer NA means that the paper does not include theoretical results. 
        \item All the theorems, formulas, and proofs in the paper should be numbered and cross-referenced.
        \item All assumptions should be clearly stated or referenced in the statement of any theorems.
        \item The proofs can either appear in the main paper or the supplemental material, but if they appear in the supplemental material, the authors are encouraged to provide a short proof sketch to provide intuition. 
        \item Inversely, any informal proof provided in the core of the paper should be complemented by formal proofs provided in appendix or supplemental material.
        \item Theorems and Lemmas that the proof relies upon should be properly referenced. 
    \end{itemize}

    \item {\bf Experimental Result Reproducibility}
    \item[] Question: Does the paper fully disclose all the information needed to reproduce the main experimental results of the paper to the extent that it affects the main claims and/or conclusions of the paper (regardless of whether the code and data are provided or not)?
    \item[] Answer: \answerYes{} %
    \item[] Justification: 
    We give exhaustive information on how to reproduce the experiments presented in the paper in sections~\ref{sec:instruction} and \ref{sec:fine-tuning-abl}.
    \item[] Guidelines:
    \begin{itemize}
        \item The answer NA means that the paper does not include experiments.
        \item If the paper includes experiments, a No answer to this question will not be perceived well by the reviewers: Making the paper reproducible is important, regardless of whether the code and data are provided or not.
        \item If the contribution is a dataset and/or model, the authors should describe the steps taken to make their results reproducible or verifiable. 
        \item Depending on the contribution, reproducibility can be accomplished in various ways. For example, if the contribution is a novel architecture, describing the architecture fully might suffice, or if the contribution is a specific model and empirical evaluation, it may be necessary to either make it possible for others to replicate the model with the same dataset, or provide access to the model. In general. releasing code and data is often one good way to accomplish this, but reproducibility can also be provided via detailed instructions for how to replicate the results, access to a hosted model (e.g., in the case of a large language model), releasing of a model checkpoint, or other means that are appropriate to the research performed.
        \item While NeurIPS does not require releasing code, the conference does require all submissions to provide some reasonable avenue for reproducibility, which may depend on the nature of the contribution. For example
        \begin{enumerate}
            \item If the contribution is primarily a new algorithm, the paper should make it clear how to reproduce that algorithm.
            \item If the contribution is primarily a new model architecture, the paper should describe the architecture clearly and fully.
            \item If the contribution is a new model (e.g., a large language model), then there should either be a way to access this model for reproducing the results or a way to reproduce the model (e.g., with an open-source dataset or instructions for how to construct the dataset).
            \item We recognize that reproducibility may be tricky in some cases, in which case authors are welcome to describe the particular way they provide for reproducibility. In the case of closed-source models, it may be that access to the model is limited in some way (e.g., to registered users), but it should be possible for other researchers to have some path to reproducing or verifying the results.
        \end{enumerate}
    \end{itemize}

\item {\bf Open access to data and code}
    \item[] Question: Does the paper provide open access to the data and code, with sufficient instructions to faithfully reproduce the main experimental results, as described in supplemental material?
    \item[] Answer: \answerNo{} %
    \item[] Justification: 
    Including a checkpoint of a 7B model derivative of Llama necessitates agreements, and without it, reproducing the results would not be possible.
    We nevertheless provide code to reproduce the detection experiments.
    For the instruction generation code, we refer to the code of \cite{wang2022self}, and for the fine-tuning code, we refer to \cite{alpaca}.
    \item[] Guidelines:
    \begin{itemize}
        \item The answer NA means that paper does not include experiments requiring code.
        \item Please see the NeurIPS code and data submission guidelines (\url{https://nips.cc/public/guides/CodeSubmissionPolicy}) for more details.
        \item While we encourage the release of code and data, we understand that this might not be possible, so “No” is an acceptable answer. Papers cannot be rejected simply for not including code, unless this is central to the contribution (e.g., for a new open-source benchmark).
        \item The instructions should contain the exact command and environment needed to run to reproduce the results. See the NeurIPS code and data submission guidelines (\url{https://nips.cc/public/guides/CodeSubmissionPolicy}) for more details.
        \item The authors should provide instructions on data access and preparation, including how to access the raw data, preprocessed data, intermediate data, and generated data, etc.
        \item The authors should provide scripts to reproduce all experimental results for the new proposed method and baselines. If only a subset of experiments are reproducible, they should state which ones are omitted from the script and why.
        \item At submission time, to preserve anonymity, the authors should release anonymized versions (if applicable).
        \item Providing as much information as possible in supplemental material (appended to the paper) is recommended, but including URLs to data and code is permitted.
    \end{itemize}

\item {\bf Experimental Setting/Details}
    \item[] Question: Does the paper specify all the training and test details (e.g., data splits, hyperparameters, how they were chosen, type of optimizer, etc.) necessary to understand the results?
    \item[] Answer: \answerYes{} %
    \item[] Justification: The finetuning parameters are not optimized (we use canonical values). We also detail our choice for the watermark parameters. Refer to Sec.~\ref{sec:instruction}.
    \item[] Guidelines:
    \begin{itemize}
        \item The answer NA means that the paper does not include experiments.
        \item The experimental setting should be presented in the core of the paper to a level of detail that is necessary to appreciate the results and make sense of them.
        \item The full details can be provided either with the code, in appendix, or as supplemental material.
    \end{itemize}

\item {\bf Experiment Statistical Significance}
    \item[] Question: Does the paper report error bars suitably and correctly defined or other appropriate information about the statistical significance of the experiments?
    \item[] Answer: \answerYes{} %
    \item[] Justification: For most of our experiments, we plot results averaged over several runs, and show error bars (we disclose the details of their computations in App.~\ref{app:repporting}).
    \item[] Guidelines:
    \begin{itemize}
        \item The answer NA means that the paper does not include experiments.
        \item The authors should answer "Yes" if the results are accompanied by error bars, confidence intervals, or statistical significance tests, at least for the experiments that support the main claims of the paper.
        \item The factors of variability that the error bars are capturing should be clearly stated (for example, train/test split, initialization, random drawing of some parameter, or overall run with given experimental conditions).
        \item The method for calculating the error bars should be explained (closed form formula, call to a library function, bootstrap, etc.)
        \item The assumptions made should be given (e.g., Normally distributed errors).
        \item It should be clear whether the error bar is the standard deviation or the standard error of the mean.
        \item It is OK to report 1-sigma error bars, but one should state it. The authors should preferably report a 2-sigma error bar than state that they have a 96\% CI, if the hypothesis of Normality of errors is not verified.
        \item For asymmetric distributions, the authors should be careful not to show in tables or figures symmetric error bars that would yield results that are out of range (e.g. negative error rates).
        \item If error bars are reported in tables or plots, The authors should explain in the text how they were calculated and reference the corresponding figures or tables in the text.
    \end{itemize}

\item {\bf Experiments Compute Resources}
    \item[] Question: For each experiment, does the paper provide sufficient information on the computer resources (type of compute workers, memory, time of execution) needed to reproduce the experiments?
    \item[] Answer: \answerYes{} %
    \item[] Justification: see App.~\ref{app:compute_ressources}
    \item[] Guidelines:
    \begin{itemize}
        \item The answer NA means that the paper does not include experiments.
        \item The paper should indicate the type of compute workers CPU or GPU, internal cluster, or cloud provider, including relevant memory and storage.
        \item The paper should provide the amount of compute required for each of the individual experimental runs as well as estimate the total compute. 
        \item The paper should disclose whether the full research project required more compute than the experiments reported in the paper (e.g., preliminary or failed experiments that didn't make it into the paper). 
    \end{itemize}
    
\item {\bf Code Of Ethics}
    \item[] Question: Does the research conducted in the paper conform, in every respect, with the NeurIPS Code of Ethics \url{https://neurips.cc/public/EthicsGuidelines}?
    \item[] Answer: \answerYes{} %
    \item[] Justification: The research conducted in the paper conform, in every respect, with the NeurIPS Code of Ethics
    \item[] Guidelines:
    \begin{itemize}
        \item The answer NA means that the authors have not reviewed the NeurIPS Code of Ethics.
        \item If the authors answer No, they should explain the special circumstances that require a deviation from the Code of Ethics.
        \item The authors should make sure to preserve anonymity (e.g., if there is a special consideration due to laws or regulations in their jurisdiction).
    \end{itemize}

\item {\bf Broader Impacts}
    \item[] Question: Does the paper discuss both potential positive societal impacts and negative societal impacts of the work performed?
    \item[] Answer: \answerYes{} %
    \item[] Justification: see App.~\ref{app:broader-impacts}
    \item[] Guidelines:
    \begin{itemize}
        \item The answer NA means that there is no societal impact of the work performed.
        \item If the authors answer NA or No, they should explain why their work has no societal impact or why the paper does not address societal impact.
        \item Examples of negative societal impacts include potential malicious or unintended uses (e.g., disinformation, generating fake profiles, surveillance), fairness considerations (e.g., deployment of technologies that could make decisions that unfairly impact specific groups), privacy considerations, and security considerations.
        \item The conference expects that many papers will be foundational research and not tied to particular applications, let alone deployments. However, if there is a direct path to any negative applications, the authors should point it out. For example, it is legitimate to point out that an improvement in the quality of generative models could be used to generate deepfakes for disinformation. On the other hand, it is not needed to point out that a generic algorithm for optimizing neural networks could enable people to train models that generate Deepfakes faster.
        \item The authors should consider possible harms that could arise when the technology is being used as intended and functioning correctly, harms that could arise when the technology is being used as intended but gives incorrect results, and harms following from (intentional or unintentional) misuse of the technology.
        \item If there are negative societal impacts, the authors could also discuss possible mitigation strategies (e.g., gated release of models, providing defenses in addition to attacks, mechanisms for monitoring misuse, mechanisms to monitor how a system learns from feedback over time, improving the efficiency and accessibility of ML).
    \end{itemize}
    
\item {\bf Safeguards}
    \item[] Question: Does the paper describe safeguards that have been put in place for responsible release of data or models that have a high risk for misuse (e.g., pretrained language models, image generators, or scraped datasets)?
    \item[] Answer: \answerNA{} %
    \item[] Justification: Our research does not pose such risks
    \item[] Guidelines:
    \begin{itemize}
        \item The answer NA means that the paper poses no such risks.
        \item Released models that have a high risk for misuse or dual-use should be released with necessary safeguards to allow for controlled use of the model, for example by requiring that users adhere to usage guidelines or restrictions to access the model or implementing safety filters. 
        \item Datasets that have been scraped from the Internet could pose safety risks. The authors should describe how they avoided releasing unsafe images.
        \item We recognize that providing effective safeguards is challenging, and many papers do not require this, but we encourage authors to take this into account and make a best faith effort.
    \end{itemize}

\item {\bf Licenses for existing assets}
    \item[] Question: Are the creators or original owners of assets (e.g., code, data, models), used in the paper, properly credited and are the license and terms of use explicitly mentioned and properly respected?
    \item[] Answer: \answerYes{} %
    \item[] Justification: 
    We use the Llama family of models for training and synthetic data generation (Llama 2 Community License), and sentences from Wikipedia in Sec.~\ref{sec:fine-tuning-abl} or OASST1 in App.~\ref{appendix:experiments}, which have a permissive license.
    \item[] Guidelines:
    \begin{itemize}
        \item The answer NA means that the paper does not use existing assets.
        \item The authors should cite the original paper that produced the code package or dataset.
        \item The authors should state which version of the asset is used and, if possible, include a URL.
        \item The name of the license (e.g., CC-BY 4.0) should be included for each asset.
        \item For scraped data from a particular source (e.g., website), the copyright and terms of service of that source should be provided.
        \item If assets are released, the license, copyright information, and terms of use in the package should be provided. For popular datasets, \url{paperswithcode.com/datasets} has curated licenses for some datasets. Their licensing guide can help determine the license of a dataset.
        \item For existing datasets that are re-packaged, both the original license and the license of the derived asset (if it has changed) should be provided.
        \item If this information is not available online, the authors are encouraged to reach out to the asset's creators.
    \end{itemize}

\item {\bf New Assets}
    \item[] Question: Are new assets introduced in the paper well documented and is the documentation provided alongside the assets?
    \item[] Answer: \answerNA{} %
    \item[] Justification: Not Applicable
    \item[] Guidelines:
    \begin{itemize}
        \item The answer NA means that the paper does not release new assets.
        \item Researchers should communicate the details of the dataset/code/model as part of their submissions via structured templates. This includes details about training, license, limitations, etc. 
        \item The paper should discuss whether and how consent was obtained from people whose asset is used.
        \item At submission time, remember to anonymize your assets (if applicable). You can either create an anonymized URL or include an anonymized zip file.
    \end{itemize}

\item {\bf Crowdsourcing and Research with Human Subjects}
    \item[] Question: For crowdsourcing experiments and research with human subjects, does the paper include the full text of instructions given to participants and screenshots, if applicable, as well as details about compensation (if any)? 
    \item[] Answer: \answerNA{} %
    \item[] Justification: Not applicable
    \item[] Guidelines:
    \begin{itemize}
        \item The answer NA means that the paper does not involve crowdsourcing nor research with human subjects.
        \item Including this information in the supplemental material is fine, but if the main contribution of the paper involves human subjects, then as much detail as possible should be included in the main paper. 
        \item According to the NeurIPS Code of Ethics, workers involved in data collection, curation, or other labor should be paid at least the minimum wage in the country of the data collector. 
    \end{itemize}

\item {\bf Institutional Review Board (IRB) Approvals or Equivalent for Research with Human Subjects}
    \item[] Question: Does the paper describe potential risks incurred by study participants, whether such risks were disclosed to the subjects, and whether Institutional Review Board (IRB) approvals (or an equivalent approval/review based on the requirements of your country or institution) were obtained?
    \item[] Answer: \answerNA{} %
    \item[] Justification: Not applicable
    \item[] Guidelines:
    \begin{itemize}
        \item The answer NA means that the paper does not involve crowdsourcing nor research with human subjects.
        \item Depending on the country in which research is conducted, IRB approval (or equivalent) may be required for any human subjects research. If you obtained IRB approval, you should clearly state this in the paper. 
        \item We recognize that the procedures for this may vary significantly between institutions and locations, and we expect authors to adhere to the NeurIPS Code of Ethics and the guidelines for their institution. 
        \item For initial submissions, do not include any information that would break anonymity (if applicable), such as the institution conducting the review.
    \end{itemize}

\end{enumerate}

}
\end{document}